\newcommand\vldbdoi{XX.XX/XXX.XX}
\newcommand\vldbpages{XXX-XXX}
\newcommand\vldbvolume{15}
\newcommand\vldbissue{X}
\newcommand\vldbyear{2022}
\newcommand\vldbauthors{\authors}
\newcommand\vldbtitle{\shorttitle} 
\newcommand\vldbavailabilityurl{URL_TO_YOUR_ARTIFACTS}
\newcommand\vldbpagestyle{plain}
\newenvironment{customlegend}[1][]{%
    \begingroup
    \pgfplots@init@cleared@structures
    \pgfplotsset{#1}%
}{%
    \pgfplots@createlegend
    \endgroup
}%
\def\addlegendimage{\pgfplots@addlegendimage}
\title{Time- and Space-Efficient Regular Path Queries on Graphs}
\author{Diego Arroyuelo}
\affiliation{%
  \institution{Universidad Técnica Federico Santa
  María \& IMFD}
  \city{Santiago}
  \state{Chile}
}
\email{darroyue@inf.utfsm.cl}
\author{Aidan Hogan}
\affiliation{%
  \institution{DCC, University of Chile \& IMFD}
  \city{Santiago}
  \state{Chile}
}
\email{ahogan@dcc.uchile.cl}
\author{Gonzalo Navarro}
\affiliation{%
  \institution{DCC, University of Chile \& IMFD}
  \city{Santiago}
  \state{Chile}
}
\email{gnavarro@dcc.uchile.cl}
\author{Javiel Rojas-Ledesma}
\affiliation{%
  \institution{DCC, University of Chile \& IMFD}
  \city{Santiago}
  \state{Chile}
}
\email{jrojas@dcc.uchile.cl}
\newtheorem{fact}{Fact}
\newcommand{\dd}{\mathinner{.\,.}}
\newcommand{\LF}{\mathsf{LF}}
\newcommand{\rank}{\mathsf{rank}}
\renewcommand{\C}{\ensuremath{\Sigma}}
\newcommand{\V}{\ensuremath{\Phi}}
\newcommand{\dom}[1]{\ensuremath{\mathsf{dom}(#1)}}
\newcommand{\inv}[1]{\ensuremath{\textsc{\^{}}#1}}
\newcommand{\comp}[1]{\ensuremath{#1^{\leftrightarrow}}}
\newcommand{\Ci}{\inv{\C}}
\newcommand{\Cc}{\comp{\C}}
\newcommand{\no}[1]{}
\newcommand{\lab}[1]{\ensuremath{\mathsf{word}(#1)}}
\newcommand{\rcat}[2]{\ensuremath{#1/#2}}
\newcommand{\ror}[2]{\ensuremath{#1|#2}}
\newcommand{\rkle}[1]{\ensuremath{#1^{*}}}
\newcommand{\rplu}[1]{\ensuremath{#1^{+}}}
\newcommand{\ropt}[1]{\ensuremath{#1^{?}}}
\newcommand{\rinv}[1]{\inv{#1}}
\newcommand{\gf}[1]{\textsf{#1}\xspace}
\newcommand{\lone}{\gf{l1}}
\newcommand{\ltwo}{\gf{l2}}
\newcommand{\lfive}{\gf{l5}}
\newcommand{\metro}{\rplu{(\ror{\ror{\lone}{\ltwo}}{\lfive})}}
\newenvironment{example}{\bigskip\noindent {\em Example.}}{\hfill $\Box$\\}
\begin{document}

\begin{abstract}
We introduce a time- and space-efficient technique to solve regular path 
queries over labeled graphs. We combine a bit-parallel simulation of the
Glushkov automaton of the regular expression with the ring index introduced 
by Arroyuelo et al., exploiting its wavelet tree representation 
of the triples in order to efficiently reach the states of the product 
graph that are relevant for the query. Our query algorithm is able to simultaneously process several automaton states, as well as several graph nodes/labels. Our experimental results show 
that our representation uses 3--5 times less space than the alternatives in the literature, while generally outperforming them in query times (1.67 times faster than the next best).
\end{abstract}

\maketitle

\pagestyle{\vldbpagestyle}
\begingroup\small\noindent\raggedright\textbf{PVLDB Reference Format:}\\
\vldbauthors. \vldbtitle. PVLDB, \vldbvolume(\vldbissue): \vldbpages, \vldbyear.\\
\href{https://doi.org/\vldbdoi}{doi:\vldbdoi}
\endgroup
\begingroup
\renewcommand\thefootnote{}\footnote{\noindent
This work is licensed under the Creative Commons BY-NC-ND 4.0 International License. Visit \url{https://creativecommons.org/licenses/by-nc-nd/4.0/} to view a copy of this license. For any use beyond those covered by this license, obtain permission by emailing \href{mailto:info@vldb.org}{info@vldb.org}. Copyright is held by the owner/author(s). Publication rights licensed to the VLDB Endowment. \\
\raggedright Proceedings of the VLDB Endowment, Vol. \vldbvolume, No. \vldbissue\ %
ISSN 2150-8097. \\
\href{https://doi.org/\vldbdoi}{doi:\vldbdoi} \\
}\addtocounter{footnote}{-1}\endgroup

\ifdefempty{\vldbavailabilityurl}{}{
\vspace{.3cm}
\begingroup\small\noindent\raggedright\textbf{PVLDB Artifact Availability:}\\
The source code, data, and/or other artifacts have been made available at \url{https://github.com/darroyue/Ring-RPQ}.
\endgroup
}

\section{Introduction}

A characteristic feature of graph databases is the ability to query over paths of arbitrary length~\cite{AnglesABHRV17}. This feature is typically supported as \textit{regular path queries} (\textit{RPQs})~\cite{CruzMW87,MendelzonW95}, which specify a regular expression that constrains matching paths. Consider the graph of Fig.~\ref{fig:dg2} describing different means of transportation within Santiago de Chile. Edges are directed and labeled with the type of transportation (\lone, \ltwo and \lfive denote three metro lines). We can find pairs of locations reachable by metro with an RPQ $x \xrightarrow{\metro} y$, where $x$ and $y$ are variables over the nodes of the graph, while the regular expression $\metro$ will match paths of length one-or-more such that each edge has the label \lone, \ltwo or \lfive. We may also fix one or both nodes in an RPQ, for example $\gf{Baquedano} \xrightarrow{\metro} y$ finds nodes reachable from $\gf{Baquedano}$ by metro.

While regular path queries have long been studied in theoretical works~\cite{CruzMW87,MendelzonW95}, recently they have been included in practical query languages for graphs~\cite{AnglesABHRV17}. The SPARQL~1.1~\cite{sparql11} query language for RDF graphs includes \textit{property paths}~\cite{KostylevR0V15}, which extend RPQs with inverse paths and negated edge labels. Oracle's graph query language PGQL~\cite{RestHKMC16} also supports RPQs, as does the G-CORE query language~\cite{AnglesABBFGLPPS18} defined by the LDBC standardisation committee. The Cypher query language~\cite{FrancisGGLLMPRS18}, supported by Neo4j, includes limited forms of RPQs (with Kleene-star and concatenation). In summary, RPQs have become a key feature in modern graph databases~\cite{AnglesABHRV17}, and are frequently used: in an analysis of 208 million SPARQL queries issued to the Wikidata Query Service~\cite{MalyshevKGGB18}, Bonifati et al.~\cite{BonifatiMT19} find that 24\% of the queries use at least one RPQ/property path feature.

The problem of efficiently evaluating RPQs has been gaining increasing attention in recent years~\cite{MiaoST07,GubichevN11,KoschmiederL12,DeyC0GWL13,GubichevBS13,YakovetsGG13,WangRJLYF14,FletcherPP16,NoleS16,WangWZ16,YakovetsGG16,Abul-Basher17,BaierDRV17,HartigP17,NguyenK17,ColazzoMNS18,FiondaPC19,MehmoodSSNd19,MiuraAK19,WadhwaPRBB19,JachietGGL20,PacaciBO20,TetzelLK20,GuoGZ21,KuijpersFLY21,LiuWLLW21}. The traditional algorithm -- used, for example, in the theoretical literature to prove complexity bounds -- is based on representing the regular expression of the RPQ as a non-deterministic finite automaton, defining the product graph of the data graph and the automaton, and then applying graph search on the product graph (BFS, DFS, etc.)~\cite{MendelzonW95}. While the product graph is potentially large, practical algorithms can avoid materialising it, and rather expand it lazily during navigation. Other more recent approaches propose the use of recursive queries~\cite{YakovetsGG13,YakovetsGG16,JachietGGL20}, parallel~\cite{MiuraAK19} and distributed~\cite{NoleS16,WangWZ16,ColazzoMNS18,MehmoodSSNd19,GuoGZ21} frameworks, indexing techniques~\cite{GubichevBS13,FletcherPP16,KuijpersFLY21,LiuWLLW21}, multi-query optimization~\cite{Abul-Basher17}, approximation~\cite{WadhwaPRBB19}, just-in-time compilation~\cite{TetzelLK20}, etc., in order to efficiently evaluate RPQs. However, these works have mainly focused on improving efficiency in terms of time, but not space.

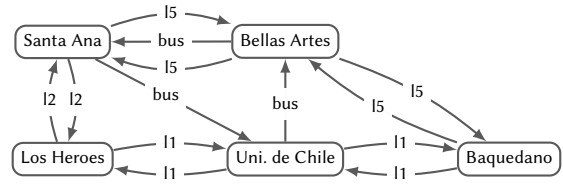
\begin{figure}[tb]
\setlength{\hgap}{1.5cm}
\setlength{\vgap}{1.1cm}
\centering
\begin{tikzpicture}
%
%
%
%
%

\newcommand{\bend}{10}
\newcommand{\bbend}{18}

\node[iri,anchor=mid] (ba) {Baquedano};

\node[iri,anchor=mid,left=\hgap of ba] (uc) {Uni.\ de Chile}
  edge[arrin,pos=0.5,bend right=\bend] node[lab] {l1} (ba)
  edge[arrout,pos=0.5,bend left=\bend] node[lab] {l1} (ba);
  
\node[iri,anchor=mid,left=\hgap of uc] (lh) {Los Heroes}
  edge[arrin,pos=0.5,bend right=\bend] node[lab] {l1} (uc)
  edge[arrout,pos=0.5,bend left=\bend] node[lab] {l1} (uc);
  
\node[iri,anchor=mid,above=\vgap of lh] (ana) {Santa Ana}
  edge[arrin,pos=0.5,bend right=\bbend] node[lab] {l2} (lh)
  edge[arrout,pos=0.5,bend left=\bbend] node[lab] {l2} (lh)
  edge[arrout] node[lab] {bus} (uc);
  
\node[iri,anchor=mid,above=\vgap of uc] (art) {Bellas Artes}
  edge[arrin,pos=0.5,bend right=\bend] node[lab] {l5} (ba)
  edge[arrout,pos=0.5,bend left=\bend] node[lab] {l5} (ba)
  edge[arrin,pos=0.5,bend right=\bbend] node[lab] {l5} (ana)
  edge[arrout,pos=0.5,bend left=\bbend] node[lab] {l5} (ana)
  edge[arrout,pos=0.5] node[lab] {bus} (ana)
  edge[arrin] node[lab] {bus} (uc);

\end{tikzpicture}
\caption{Santiago metro stations with metro lines and buses \label{fig:dg2}}
\end{figure}

\paragraph{Our Contribution}

We introduce a new technique to handle RPQs (with inverses) on labeled graphs that works on
a recent compressed representation of the graph called a ring~\cite{AHNRRS21}.
The ring was introduced for handling multijoin queries in worst-case optimal
time while using essentially the same space as a plain representation of all
the triples $(s,p,o)$ denoting edges $s \xrightarrow{p} o$ of the labeled graph. The ring represents those nodes
and edge labels as a sequence in a convenient order called the Burrows-Wheeler
Transform (BWT)~\cite{BW94} of the triples, and in turn represents the sequence using a wavelet tree data structure~\cite{GGV03}. Our technique combines (1) the 
backward search capabilities of the BWT, (2) the ability of the wavelet trees 
to efficiently work on ranges of nodes or edge labels, and (3) the regularity of the Glushkov automaton \cite{Glu61} of the regular expression and the versatility of its
bit-parallel simulation \cite{NR04}. As a result, we efficiently navigate towards the nodes of
the product graph that are involved in the solution of the RPQ, being able to process several NFA states and several graph nodes/labels simultaneously. Theorem~\ref{thm:main} shows that we spend logarithmic time per node and edge of the product subgraph induced by the query. Our experimental results show that we obtain a
time-competitive query solution within about twice the space of a compact data representation (because we need to duplicate all edges to handle reversed edges in the RPQs). This is still 3--5 times less than the space used by prominent alternative indexes that handle RPQs, while our index outperforms them in general: it was the fastest index on average in our benchmark, 1.67 times faster than Blazegraph, the second fastest index.

\section{Related Work}
\label{sec:related}


We now present related work on efficiently evaluating RPQs and related types of expressions, such as property paths in SPARQL.

\paragraph{Query planning} Various techniques have been proposed for evaluating path-based queries. Earlier works focused on evaluating shortest paths, such as the works on spatial networks~\citet{MiaoST07}, or reducing navigation to joins~\cite{GubichevN11}. Later works began to focus on RPQs, coinciding with the standardisation of SPARQL~1.1. 

Regarding navigation-based approaches, \citet{KoschmiederL12} propose to split an RPQ by its \textit{rare labels}, i.e., labels with fewer than $m$ edges where at least one such edge must be used in each path matching the RPQ; for example, given an expression $\rcat{\rcat{\rkle{a}}{b}}{\rkle{c}}$ (where $\rkle{\cdot}$ denotes Kleene star and $\rcat{\cdot}{\cdot}$ concatenation), if $b$ has few edges, the expression may be split into $\rcat{\rkle{a}}{b}$ and $\rcat{b}{\rkle{c}}$ to ensure more selective start/end points, and later joined. \citet{NoleS16} evaluate RPQs using the concept of \textit{Brzozowski derivatives}, whereby the regular expression is rewritten based on the symbols already read such that the rewritten expression matches suffixes that complete the path; for example, if the original expression is $\rcat{\rcat{\rkle{a}}{b}}{\rkle{c}}$, and we advance along an edge labeled $b$, then the derivative is $\rkle{c}$. \citet{WangWZ16} evaluate RPQs based on \textit{partial answers} that can be connected, allowing for these answers to not only be prefixes, but also infixes and suffixes; for example, if the original expression is $\rcat{\rcat{\rkle{a}}{b}}{\rkle{c}}$, partial answers corresponding to $\rkle{a}$ (prefix), $\rcat{b}{c}$ (infix) and $\rkle{c}$ (suffix) can be generated in parallel and combined. \citet{NguyenK17} split RPQs similarly to the ``rare labels'' strategy, but rather minimize the cost of the most costly sub-RPQ resulting from the split. \citet{WadhwaPRBB19} compute approximate RPQ results using bidirectional random walks, where a forward walk begins from the source node, a backward walk begins from the target node, and walks that ``meet'' are reported as solutions.

Other approaches evaluate RPQs using recursive (query) languages. \citet{DeyC0GWL13} evaluate RPQs using Datalog or recursive SQL queries; they further return \textit{provenance} in the form of all edges involved in some or all matching paths. \citet{YakovetsGG13} likewise propose to translate property paths into recursive SQL queries, but note that the resulting queries can be complex and difficult to optimize. \citet{JachietGGL20} propose an extended relational algebra with a transitivity/fixpoint operator, and describe how RPQs (more specifically, unions of conjunctive RPQs) can be translated to this algebra. \citet{FiondaPC19} propose \textit{extended property paths}, which includes difference and intersection over paths, as well as the ability to express tests that constrain nodes along the path; non-recursive expressions are translated into SPARQL~1.1, while recursive expressions require a recursive extension of SPARQL.

Combining both navigational/automata and recursive/relational approaches, \citet{YakovetsGG16} propose hybrid ``waveplans'' that can mix operators from both algebras and thus can express novel query plans. \citet{Abul-Basher17} propose a related framework called ``swarmguide'' for optimizing multiple RPQs at once, based on finding a maximum common sub-automaton for the RPQs, which can be evaluated and reused across RPQs using views.

Finally, a number of approaches leverage software or hardware acceleration techniques. \citet{MiuraAK19} evaluate RPQs on top of field programmable gate arrays (FPGAs), which enable high degrees of parallelism; specifically, the RPQ is split into multiple ``stages'', where sort--merge joins are applied on the FPGA to join results from different stages in a pipelined manner. \citet{TetzelLK20} use just-in-time compilation techniques in order to generate native C++ code that directly evaluates the RPQ on the graph.

\paragraph{Indexing} Custom indexing approaches have also been proposed for optimizing RPQ evaluation. \citet{GubichevBS13} extend RDF-3X with support for property paths using an indexing technique called FERRARI~\cite{SeufertABW13}, based on encoding the transitive closure of the graph induced by a given edge label using (potentially overapproximated) intervals of node ids.  \citet{WangRJLYF14} propose a predicate-based indexing scheme to evaluate RPQs over RDF graphs, where four orders are indexed -- \textsc{pso}, \textsc{pos}, \textsc{ps}, \textsc{po} -- in order to efficiently evaluate triple patterns with a fixed predicate. \citet{FletcherPP16} propose a \textit{$k$-path index} that indexes all paths of length up to $k$ in a B$^+$-tree, specifically indexing the word of the path, the source node, and the target node. \citet{KuijpersFLY21} describe the use of $k$-path indexes to optimize the evaluation of Cypher queries in the Neo4j graph database, while \citet{LiuWLLW21} use similar indexes, which they populate with frequent paths mined from the graph.

\paragraph{Other settings} We focus on evaluating RPQs over a static graph on a single machine. However, we briefly mention some other works on evaluating RPQs in other settings. A number of works have looked into enabling horizontal scale by evaluating RPQs over RDF graphs distributed/partitioned over multiple machines~\cite{NoleS16,WangWZ16,ColazzoMNS18,MehmoodSSNd19,GuoGZ21}, sometimes using existing frameworks such as Pregel~\cite{NoleS16} or MapReduce~\cite{ColazzoMNS18}. Other works have explored the evaluation of property paths/RPQs over Linked Data in the decentralized setting, whereby RDF graphs on the Web are navigated dynamically while evaluating the RPQ~\cite{BaierDRV17,HartigP17}. A recent work has explored the evaluation of RPQs over sliding windows of streaming graph data~\cite{PacaciBO20}.

\paragraph{Novelty} We introduce a novel technique to evaluate RPQs (with inverses) that is efficient both in time and space. While some indexing schemes explore a time-space trade-off, they occupy space \textit{additional} to representing and indexing the graph~\cite{SeufertABW13}. To the best of our knowledge, our approach is the first that can efficiently evaluate RPQs on a compressed representation of the graph, and the first to evaluate RPQs based on Glushkov automata \cite{Glu61}, highlighting key advantages of this construction: It not only enables a more space-efficient bit-parallel simulation of the NFA \cite{NR04}, but also its transitions exhibit a regularity that is crucial to efficiently evaluating RPQs. The combination of the backward search capabilities of the BWT \cite{BW94}, the ability of the wavelet trees \cite{GGV03} to work on ranges of nodes/labels, and the regularity of Glushvov's automaton, allow us to simulate the traversal of {\em only} the product subgraph induced by the RPQ (without spending time on outgoing edges). The bit-parallel simulation and the ability to work on ranges of nodes and labels further enable processing sets of nodes of the product graph {\em simultaneously}, thus speeding up the classical evaluation strategy.

\section{Basic Concepts}


\subsection{Regular Path Queries}

Let $\C$ denote a set of symbols. We define a (directed edge-labeled) graph $G \subseteq \C \times \C \times \C$ to be a finite set of triples of symbols of the form $(s,p,o)$, denoting (subject,predicate,object). Each triple of $G$ can be viewed as a labeled edge of the form $s \xrightarrow{p} o$. Given a graph $G$, we define the \emph{nodes} of $G$ as $V = \{ x \mid \exists\, y,z,~(x,y,z) \in G \lor (z,y,x)\in G\}$.


A \emph{path} $\rho$ from $x_0$ to $x_n$ in a graph $G$ is a string of the form $x_0\,p_1\,x_1\,\ldots\,p_n\,x_{n}$ such that $(x_{i-1},p_i,x_i) \in G$ for $1 \leq i \leq n$. Abusing notation, we may write that $\rho \in G$ if $\rho$ is a path in $G$. 
We call $\lab{\rho} = p_1 \ldots p_n \in \C^*$ the \emph{word} of $\rho$.


We say that $\varepsilon$ is a \emph{regular expression}, and that any element of $\C$ is a regular expression. If $E, E_1$ and $E_2$ are regular expressions, we say that $\rkle{E}$ (Kleene closure), $\rcat{E_1}{E_2}$ (concatenation) and $\ror{E_1}{E_2}$ (disjunction) are also regular expressions. We may further use $\rplu{E}$ as an abbreviation for $\rcat{\rkle{E}}{E}$, and $\ropt{E}$ as an abbreviation for $\ror{\epsilon}{E}$.

We define by $\Ci = \{ \inv{s} \mid s \in \C \}$ the \emph{inverses} of the symbols of $\C$, and by $\Cc = \C \cup \Ci$ the set of symbols and their inverses. We assume that $\C \cap \Ci = \emptyset$ and that $s = \inv{(\inv{s})}$. We denote by $\inv{G} = \{ (y,\inv{p},x) \mid (x,p,y) \in G \}$ the \emph{inverse} of a graph $G$, and by $\comp{G} = G \cup \inv{G}$ the \emph{completion} of $G$. If $E$ is a two-way regular expression, then so is $\rinv{E}$ (inverse). A two-way regular expression on $\C$ can be rewritten to a regular expression on $\Cc$.


Given a regular expression $E$, we denote by $L(E)$ the language of $E$, and we say that a path $\rho$ \emph{matches} $E$ if and only if $\lab{\rho} \in L(E)$.

Let $\V$ denote a set of variables. 
Let $\mu : \V \rightarrow \C$ denote a partial mapping from variables to symbols. We denote the domain of $\mu$ as $\dom{\mu}$, which is the set of variables for which $\mu$ is defined. 
If $E$ is a regular expression, $s \in \V \cup \C$ and $o \in \V \cup \C$, then we call $(s,E,o)$ a \emph{regular path query} (\emph{RPQ}). Let $x_\mu$ be defined as $\mu(x)$ if $x \in \dom{\mu}$, or $x$ otherwise. We define the \emph{evaluation} of $(s,E,o)$ on $G$ as:
\begin{align*}
(s,E,o)(G) = \{ \mu \mid ~& \dom{\mu} = \{s,o \} \cap \V \text{ and there exists a path }\rho \\[-0.6ex]
 & \text{ from }s_\mu\text{ to }o_\mu\text{ in }G \text{ such that }\rho\text{ matches }E\}.
\end{align*}
%
%

\vspace{-4mm}
\begin{example}
Take the graph $G$ of Fig.~\ref{fig:dg2} and the RPQ $(x,\metro,y)$, where $x,y \in \V$ are variables. Infinitely many paths in $G$ match the expression $\metro$, including (abbreviating node labels):
\begin{center}
\gf{UCh} $~~$\gf{l1} $~~$\gf{LH} $~~$\gf{l1} $~~$\gf{UCh} \\
\gf{UCh} $~~$\gf{l1} $~~$\gf{LH} $~~$\gf{l1} $~~$\textsf{UCh} $~~$\gf{l1} $~~$\gf{LH} \\
\gf{Baq} $~~$\gf{l1} $~~$\gf{UCh} $~~$\gf{l1} $~~$\gf{LH} $~~$\gf{l2} $~~$\gf{SA} 
\end{center}
\noindent and so forth. The evaluation of the RPQ on $G$ will return all mappings such that $x$ maps to the start node of some such path, and $y$ maps to the end node of the same path. For example, from the first path, we will return a solution $\mu$ such that $\mu(x) = \gf{UCh}$, $\mu(y) = \gf{UCh}$, and $\mu$ is undefined for all other variables. The evaluation is finite as it can map $x$ and $y$, at most, to all pairs of nodes in $G$. 

If we instead consider $(\gf{Baq},\metro,y)$, where $\gf{Baq} \in \C$ and $y \in \V$, then its evaluation on $G$ will return all mappings such that $y$ maps to the end node of a path that starts with $\gf{Baq}$. For example, from the third path listed previously, we would return $\mu$ such that $\mu(y) = \gf{SA}$ and $\mu$ is undefined for all other variables.

Finally, the evaluation of $(\gf{Baq},\metro,\gf{SA})$, where $\gf{Baq},\gf{SA} \in \C$, returns a single solution $\mu$ that is undefined for all variables; if \gf{SA} were not reachable from \gf{Baq} via a path matching $\metro$, then no solution would be returned.
\end{example}

If $E$ is a two-way regular expression over $\C$, $s \in V \cup \V$ and $o \in V \cup \V$, we call $(s,E,o)$ a \emph{two-way regular path query} (2RPQ). We define the evaluation of the 2RPQ $(s,E,o)$ on $G$ as the evaluation of the RPQ $(s,E',o)$ on $\comp{G}$, where $E'$ is the rewritten form of $E$ using only atomic inverses, and is thus a regular expression over $\Cc$.


\subsection{Product Graph}

One approach for evaluating an RPQ $(s,E,o)$ on $G$ involves computing the \emph{product graph} of $G$~\cite{MendelzonW95}. Specifically, we first convert the regular expression $E$ into a non-deterministic finite automaton (NFA) $M_E = (Q, \Sigma_E, \Delta, q_0, F)$, where $Q$ denotes the set of states, $\C_E \subseteq \C$ the set of symbols used in $E$, $\Delta$ the transitions, $q_0$ the initial state, and $F$ the set of accepting states. The conversion from a regular expression to an NFA can be conducted  (for example) using Thompson's classical algorithm, where we assume that $\varepsilon$-transitions have been (subsequently) removed from $M_E$. Letting $V$ denote the nodes of $G$, then the \emph{product graph} $G_E \subseteq (V \times Q) \times (V \times Q)$ of $G$ with respect to $E$ is a directed unlabeled graph defined as follows:
\[
G_E = \{ ((x,q_x),(y,q_y)) \mid \exists\, p \in \C_E, 
 (x,p,y) \in G \land (q_x,p,q_y) \in \Delta\}.
\]
\noindent
The RPQ can then be evaluated by using standard graph search algorithms (e.g., BFS, DFS, etc.) to find paths in the product graph $G_E$ that start from some node $(x,q_0) \in 
V \times \{ q_0 \}$ and end in some node $(y,q_f) \in V \times F$ (such that $x = s$ if $s \in \Sigma$, and $y = o$ if $o \in \Sigma$).

\subsection{Bit-parallel Glushkov Automata}

Consider a regular expression $E$ on alphabet $\C$ with $m$ occurrences of symbols in $\Sigma$. Compared to the 
classical Thompson's construction of an NFA from $E$, Glushkov's 
\cite{Glu61,BS86} has the disadvantage of generating $\Theta(m^2)$ edges in 
the worst case, and needing $O(m^2)$ construction time \cite{BK93}. In exchange,
it has various properties that will make it interesting for our purposes:

\begin{enumerate}
\item The NFA has no $\varepsilon$-transitions.
\item The NFA has exactly $m+1$ states, optimal in the worst case.
\item All the transitions arriving at a state have the same label.
\end{enumerate}

These properties imply the following important fact.

\begin{fact} \label{fact:1}
In a Glushov NFA, the states reached in one step from a set $X$ of states by symbol $c$ are the intersection of those reached from $X$ in one step and 
those states reached by $c$ from any state.
\end{fact}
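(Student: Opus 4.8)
The plan is to prove the claimed set equality by a double inclusion, with only one direction using anything specific to Glushkov automata. Fix the set $X \subseteq Q$ and the symbol $c$, and introduce shorthand: let $N(X) = \{\, q : \exists\, p \in X,\ \exists\, b,\ (p,b,q) \in \Delta \,\}$ be the states reachable from $X$ in one step by \emph{some} symbol, let $N_c(X) = \{\, q : \exists\, p \in X,\ (p,c,q) \in \Delta \,\}$ be those reachable from $X$ in one step specifically by $c$, and let $T_c = \{\, q : \exists\, r,\ (r,c,q) \in \Delta \,\}$ be the states reachable by $c$ from any state whatsoever. The statement to prove is $N_c(X) = N(X) \cap T_c$.

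First I would dispose of the inclusion $N_c(X) \subseteq N(X) \cap T_c$, which holds for an arbitrary NFA: if $q \in N_c(X)$ then some transition $(p,c,q) \in \Delta$ with $p \in X$ exists, and that very transition witnesses both $q \in N(X)$ and $q \in T_c$. No property of the Glushkov construction is needed here.

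For the reverse inclusion $N(X) \cap T_c \subseteq N_c(X)$, take $q$ in the left-hand side. Unfolding $q \in N(X)$ yields some $p \in X$ and some symbol $b$ with $(p,b,q) \in \Delta$; unfolding $q \in T_c$ yields some state $r$ with $(r,c,q) \in \Delta$. Now property~3 of the Glushkov NFA --- every transition arriving at a given state carries the same label --- applied to the two transitions into $q$ forces $b = c$. Hence $(p,c,q) \in \Delta$ with $p \in X$, i.e., $q \in N_c(X)$, and the equality follows.

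The step I expect to be the only real point of the argument --- rather than an obstacle in the usual sense --- is the use of property~3: it is what lets one speak of \emph{the} label of a (non-initial) state, and thereby exclude the scenario in which $q$ is reachable from $X$ only via some symbol $b \neq c$ while still being reachable by $c$ from elsewhere. Everything else is routine unfolding of the transition relation $\Delta$, and the argument in fact applies to any NFA in which all transitions into a state share a label, not only to those produced by Glushkov's construction.
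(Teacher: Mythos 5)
Your proof is correct and follows the same line as the paper, which states the fact as an immediate consequence of the listed Glushkov properties, with property~3 (all transitions arriving at a state share the same label) doing exactly the work you isolate in the reverse inclusion. You merely make explicit the routine double-inclusion argument the paper leaves implicit.
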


\begin{figure}[t]
    \centering
    \includegraphics[width=0.25\textwidth]{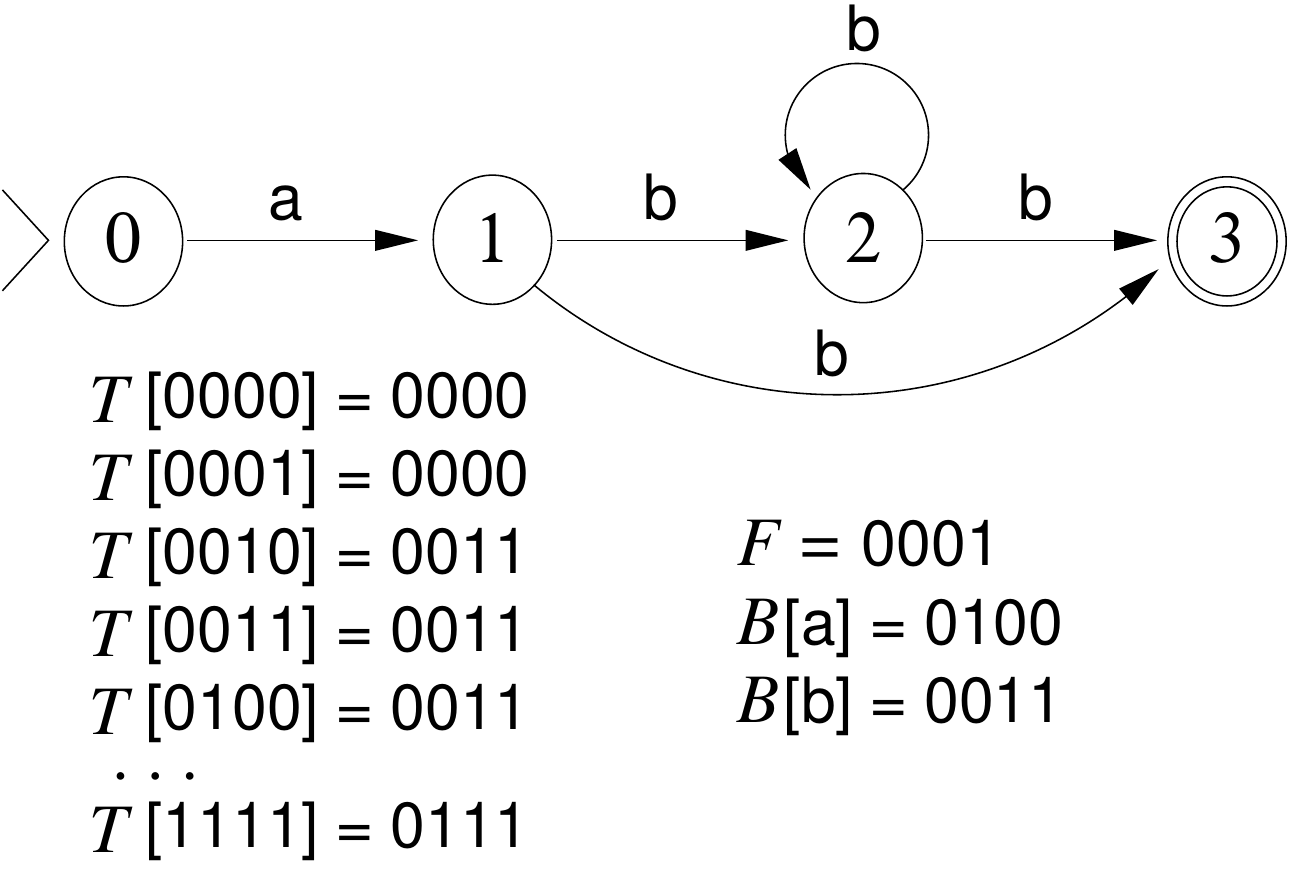}
    \caption{The Gluskov automaton of the regular expression $\rcat{\rcat{a}{\rkle{b}}}{b}$, and its bit-parallel representation below it.}
    \label{fig:glushkov}
\end{figure}

\begin{example}
The top of Fig.~\ref{fig:glushkov} gives the Gluskov automaton for $\rcat{\rcat{a}{\rkle{b}}}{b}$. Take the states $X = \{0,2\}$. The states $\{2,3\}$ are reachable from $X$ in one step via $b$, which is equal to the intersection of the states $\{1,2,3\}$ reachable in one step from $X$ via any symbol and the states $\{ 2,3 \}$ reachable in one step via $b$ from any state.
\end{example}

This property enables the \emph{bit-parallel} simulation of the NFA
\cite{NR04}. This simulation represents NFA states as bits in a computer 
word, so each configuration of active and inactive states (bits set to 1 and 0,
respectively), correspond to a state in the DFA according to the classic 
powerset construction. The simulation operates on all the states in parallel
by using the classic arithmetical and logical operations on computer words.
Assume for simplicity that the bits of the NFA states fit in a single 
computer word; we discuss the general case later. Further assume that the alphabet is an integer range $\Sigma = [1\dd\sigma]$. The simulation maintains the following
variables:

\begin{itemize}
\item A computer word $D$ holding $m+1$ bits tells, at every step, the 
active NFA states, as discussed. Assume the initial state corresponds to the
highest bit.
\item A table $B[1\dd\sigma]$ of computer words indicates with 1s, at each 
$B[c]$, the NFA states that are targets of transitions labeled $c$.
\item A table $T[0\dd 2^{m+1}-1]$ stores in $T[X]$, for each possible 
$(m+1)$-bit argument $X$ representing a set of NFA states, the states 
reachable from states in $X$ in one step, by any symbol.
\item A computer word $F$ marks with 1s the final NFA states.
\end{itemize}

The simulation is then carried out as follows:

\begin{enumerate}
\item We set $D \leftarrow 2^m$ to activate the initial state.
\item If $D ~\&~ F ~\not=~ 0$, then we have reached a final state and accept the word read
(recall that `\&' is the bitwise-and).
\item If $D=0$, then we have run out of active states and reject.
\item For each input symbol $c$, we use Fact~\ref{fact:1} to update $D$ as follows, such that the new active states are those that are reached from the current ones and also reached by symbol $c$:
\begin{equation} \label{eq}
 D ~~\gets~~ T[D] ~\&~ B[c], 
\end{equation}
\item Return to point 2.
\end{enumerate}

\begin{example}
Fig.~\ref{fig:glushkov} shows the Glushkov automaton of $\rcat{\rcat{a}{\rkle{b}}}{b}$ and its bit-parallel representation.
Given a string $S= abba$, we initialize $D \leftarrow 1000$ with the initial state $0$ activated. We now read $S[1] = a$ and update $D \gets T[1000] ~\&~ B[a] = 0100 ~\&~ 0100 = 0100$, activating state $1$. We read $S[2]=b$ and update $D \gets T[0100] ~\&~ B[b] = 0011 ~\&~ 0011 = 0011$, indicating that states $2$ and $3$ are now active. We report here the endpoint of a match since $D ~\&~ F = 0011 ~\&~ 0001 \not= 0000$. To find other endpoints, we next read $S[3] = b$ and update $D \gets T[0011] ~\&~ B[b] = 0011 ~\&~ 0011 = 0011$, reporting this position as well. Finally, we read $S[4]=a$ and update $D \gets T[0011] ~\&~ B[a] = 0011 ~\&~ 0100 = 0000$, so we run out of active states and finish.
\end{example}

The space of the simulation is $O(2^m+\sigma)$, instead of the
worst-case $O(2^m \sigma)$ of a classical DFA implementation. The tables are
built in time $O(2^m)$ by using lazy initialization for $B$.

A similar simulation can be used to read the text
in reverse order~\cite{NR04} by building a table $T'[0\dd 2^m-1]$ where $T'[X]$ marks with
1s the states that can reach some state in $X$ in one step, initializing $D
\gets F$ and, for each new symbol $c$, updating
\begin{equation} \label{eq:revglus}
 D ~~\gets~~ T'[D ~\&~ B[c]], 
\end{equation}
and accepting when $D ~\&~ 2^m \not= 0$.

Bit-parallelism uses the RAM model of computation, where all the arithmetical
and logical operations over a $w$-bit word take constant time; it is usual to
assume $w = \Theta(\log n)$, where $n$ is the data size. In our case, if 
$m+1 > w$, then we need to use $\lceil (m+1)/w \rceil$ computer words to hold
$D$, $F$, and every entry of $B$ and $T$. In this case, all the time and space
complexities get multiplied by $O(m/w)$. Furthermore, if we want to avoid the 
exponential space and time $O(2^m)$, we can split table $T$ vertically into $d$-bit 
subtables $T_1, \ldots, T_{\lceil (m+1)/d\rceil}$, so that if we partition 
$X = X_1 \cdots X_{\lceil (m+1)/d\rceil}$, then $T[X] = T_1[X_1] ~| \cdots |~ 
T_{\lceil (m+1)/d\rceil}[X_{\lceil (m+1)/d\rceil}]$, where ``$|$'' denotes the 
bitwise-or. This reduces the space to $O((m/d) 2^d + \sigma)$ and multiplies
the time by $O(m/d)$ instead of $O(m/w)$, for any desired $1\le d\le 
\min(w,m+1)$ \cite{NR04}. We will assume for simplicity that $m=O(w)$ and
use $O(2^m)$ space throughout the paper.

\subsection{The Ring}

The {\em ring} \cite{AHNRRS21} is a novel representation for a set of triples $(s,p,o)$, which supports worst-case optimal multijoin queries using the Leapfrog Triejoin algorithm \cite{Vel14}. The ring regards the $n$ triples as a set of $n$ circular strings $spo$ (or $pos$, or $osp$) of length $3$. It then
    creates three strings by shifting and sorting the circular strings:
    \begin{itemize}
        \item $L_\mathrm{o}[1\dd n]$ lists the objects $o$ that (circularly) precede the lexicographically sorted strings $spo$.
        \item $L_\mathrm{s}[1\dd n]$ lists the subjects $s$ that (circularly) precede the lexicographically sorted strings $pos$.
        \item $L_\mathrm{p}[1\dd n]$ lists the predicates $p$ that (circularly) precede the lexicographically sorted strings $osp$.
    \end{itemize}
    The concatenation $L_\mathrm{o} \cdot L_\mathrm{s} \cdot L_\mathrm{p}$ is indeed the Burrows-Wheeler Transform (BWT) \cite{BW94} of the concatenation of all the triples 
    (with some tweaks, see the original article \cite{AHNRRS21} for details).

With this arrangement, a range in $L_\mathrm{o}$ corresponds to a lexicographic interval of triples $spo$. In particular, a range may represent all the triples with a specific subject $s$ (i.e., strings starting with $s$), and a smaller range may represent all the triples with  subject $s$ and predicate $p$ (i.e., strings starting with $sp$). A range in $L_\mathrm{o}$ can also represent a range of subjects $s_b\dd s_e$, and even a subject $s$ followed by a range of predicates $p_s \dd p_e$. Analogously, ranges in $L_\mathrm{s}$ correspond to lexicographic intervals of triples $pos$ and ranges in $L_\mathrm{p}$ correspond to lexicographic intervals of triples $osp$. Note that, in the three strings, the range $[1\dd n]$ represents all the triples and a range of size $1$ represents an individual triple.

The ring retrieves triples using so-called {\em LF-steps}, defined on array $L_\mathrm{p}$ (and analogously on $L_\mathrm{s}$ and $L_\mathrm{o}$), as follows:
\begin{equation} \label{eq:lf}
\LF_\mathrm{p}(i) ~~=~~ C_\mathrm{p}[c] + \rank_c(L_\mathrm{p},i),
\end{equation}
where $c=L_\mathrm{p}[i]$, $C_\mathrm{p}[c]$ counts the symbols smaller than $c$ in $L_\mathrm{p}$, and $\rank_c(L_\mathrm{p},i)$
counts the number of times $c$ occurs in $L_\mathrm{p}[1\dd i]$. It then holds that the subject of the triple represented at $L_\mathrm{p}[i]$ is $L_\mathrm{s}[i']$ for $i'=\LF_\mathrm{p}(i)$, and the object is $L_\mathrm{o}[i'']$ for $i'' = \LF_\mathrm{s}(i')$. It further holds that $i = \LF_\mathrm{o}(i'')$, where we find the predicate at $L_\mathrm{p}[i]$.

\begin{figure}[t]
    \centering
    \includegraphics[width=0.45\textwidth]{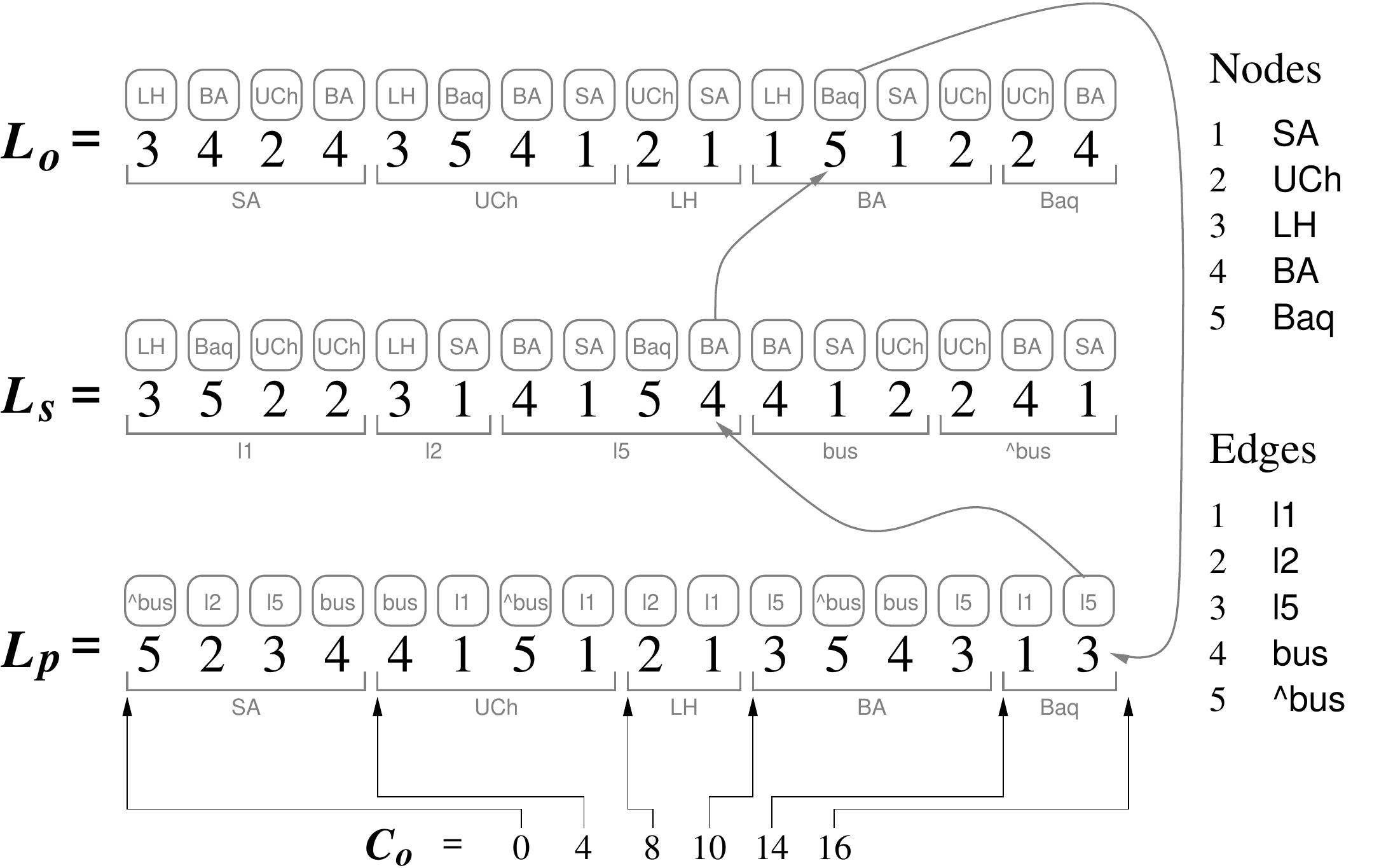}
    \caption{The ring data structure built on the graph of Fig.~\ref{fig:dg2}, which we have completed by adding a reverse edge labeled {\rm \textsf{\^{}bus}} for each edge labeled {\rm \textsf{bus}} ({\rm\textsf{l1}}, {\rm\textsf{l2}} and {\rm\textsf{l5}} are bidirectional). We also show how the last triple in $L_\mathrm{p}$ is tracked.}
    \label{fig:bwt}
\end{figure}

\begin{example}
Fig.~\ref{fig:bwt} shows the ring for the completion of the graph of Fig.~\ref{fig:dg2}, adding reversed edges labeled \textsf{\^{}bus}. On the right we map the (abbreviated) nodes and edge labels to integers. We write the abbreviated names over the numbers in the sequences $L_\mathrm{o}$, $L_\mathrm{s}$, and $L_\mathrm{p}$ for readability. Note that, for example, $L_\mathrm{p}$ can be partitioned into the triples $osp$ starting with objects $1$ (\textsf{SA}), $2$ (\textsf{UCh}), $3$ (\textsf{LH}), $4$ (\textsf{BA}), and $5$ (\textsf{Baq}), which we indicate below the sequence, and whose endpoints are marked in the array $C_\mathrm{o}$, shown on the bottom.

Consider the triple referenced from $L_\mathrm{p}[16]$. It refers to the object $5$ (\textsf{Baq}) because it belongs to the range $L_\mathrm{p}[15\dd 16]=L_\mathrm{p}[C_\mathrm{o}[5]+1\dd C_\mathrm{o}[5+1]]$. The value $L_\mathrm{p}[16]=3$ (\textsf{l5}) gives the label of the edge. To find the corresponding subject, we note that this is the fourth $3$ (\textsf{l5}) in $L_\mathrm{p}$. Then, if we go to the fourth position in the area of \textsf{l5} in $L_\mathrm{s}$, $L_\mathrm{s}[7\dd 10]$, which is $L_\mathrm{s}[10]$, we learn that the subject is $L_\mathrm{s}[10]=4$ (\textsf{BA}). Indeed, $\LF_\mathrm{p}(16)=10$. Thus, the triple is \textsf{BA} $\stackrel{\!\mathsf{l5}}{\longrightarrow}$ \textsf{Baq}. Further, $L_\mathrm{s}[10]$ is the second $4$ in $L_\mathrm{s}$, so if we go to the corresponding position $L_\mathrm{o}[12]$ (note $\LF_\mathrm{s}(10)=12$) we cyclically find $L_\mathrm{o}[12]=5$ (\textsf{Baq}), the object of the triple. We indeed return to position $L_\mathrm{p}[16]$ if we map $L_\mathrm{o}[12]$, the second $5$ in $L_\mathrm{o}$, to $L_\mathrm{p}$. Again, $\LF_\mathrm{o}(12)=16$. 
\end{example}

The key to solving multijoins with the ring is the so-called {\em backward search}, which computes in batch all the LF-steps in a range. Consider a range $L_\mathrm{p}[b_o\dd e_o]$ listing, say, all the triples with a specific object $o$ (i.e., all the triples $osp$ for any $s$ and $p$). The backward search by some specific predicate $p$ gives the range $L_\mathrm{s}[b_p\dd e_p]$ corresponding to all the triples with object $o$ and predicate $p$ (i.e., all the triples $pos$ for any $s$). This is computed with the following formula, which extends the LF-steps (Eq.~(\ref{eq:lf})) to ranges
\cite{FM05,AHNRRS21}:
\begin{eqnarray}
b_p &=& C_\mathrm{p}[p] + \rank_p(L_\mathrm{p},b_o-1)+1, \label{eq:bwdsearch} \\
e_p &=& C_\mathrm{p}[p] + \rank_p(L_\mathrm{p},e_o). \label{eq:bwdsearch2}
\end{eqnarray}
Listing the subjects $s$ in $L_\mathrm{s}[b_p\dd e_p]$ then yields all the triples with that specific predicate $p$ and object $o$, for example.

\begin{example}
Continuing our example, let us start from $L_\mathrm{p}[11\dd 14]$, corresponding to object \textsf{BA}. If we apply a backward search step from $b_o=11$ and $e_o=14$, on the label $3$ (\textsf{l5}) using Eqs.~(\ref{eq:bwdsearch}) and (\ref{eq:bwdsearch2}), we obtain $L_\mathrm{s}[b_s\dd e_s]=L_\mathrm{s}[8..9]=\langle 1,5 \rangle$, showing that we arrive at \textsf{BA} by \textsf{l5} from sources $L_\mathrm{s}[8]=1$ (\textsf{SA}) and $L_\mathrm{s}[9]=5$ (\textsf{Baq}).
\end{example}

The ring uses a data structure called a {\em wavelet tree} \cite{GGV03} to represent each of the sequences $L_\mathrm{o}$, $L_\mathrm{s}$, and $L_\mathrm{p}$, enabling the efficient evaluation of queries like $\rank_p(L_\mathrm{p},i)$. 

\subsection{Wavelet trees}
\label{sec:wtrees}

The wavelet tree represents a string $L[1\dd n]$ over an alphabet $[1\dd \sigma]$ as a perfect binary tree with $\sigma$ leaves, one per symbol, so that the $c$\textsuperscript{th} left-to-right leaf represents symbol $c$. Each internal wavelet tree node $v$ that is the ancestor of leaves $c_s\dd c_e$ represents the subsequence $L_{\langle c_s,c_e\rangle}$ of $L$ formed by the symbols in $c_s \dd c_e$. Instead of storing $L_{\langle c_s,c_e\rangle}$, node $v$ stores a bitvector $W_{\langle c_s,c_e\rangle}$, so that $W_{\langle c_s,c_e\rangle}[i]=0$ iff the leaf representing symbol $S_{\langle c_s,c_e\rangle}[i]$ descends by the left child of $v$. The leaves are conceptual and not stored. It is not hard to see that all the bitvectors stored at the internal wavelet tree nodes amount to $n\log\sigma$ bits, the same as a plain representation of $L$ (our logarithms default to base $2$).

The wavelet tree obtains $L[i]$ in $O(\log\sigma)$ time as follows. Let $v$ be the wavelet tree root, thus it stores bitvector $W = W_{\langle 1,\sigma\rangle}$ where $W[i]=0$ indicates that $L[i] \in [1\dd \sigma/2]$; otherwise $L[i] \in [\sigma/2+1\dd \sigma]$ (we assume that $\sigma$ is a power of $2$ for simplicity of presentation). In the first case, $L[i] = L_{\langle 1,\sigma\rangle}[i]$ corresponds to $L_{\langle 1,\sigma/2\rangle}[i']$, where $i' = \rank_0(W,i)$ and we continue recursively by the left child of $v$ with position $i'$. In the second case, $L[i]$ corresponds to $L_{\langle \sigma/2+1,\sigma\rangle}[i'']$, where $i'' = \rank_1(W,i)$ and we continue recursively by the right child of $v$ with position $i''$.

Operation $\rank$ on bitvectors can be done in $O(1)$ time by adding only sublinear space on top of the bitvector \cite{Cla96,Mun96}. Therefore, in time $O(\log\sigma)$ we arrive at a leaf and determine $L[i]$. The total space of the wavelet tree is $n\log\sigma + o(n\log\sigma)+O(\sigma \log n)$ bits, the latter term for the tree pointers. Note that $O(\sigma\log n)$ also absorbs the space of the arrays $C_x$ used for backward search.

A similar algorithm can be used to compute $\rank_c(L,i)$. We start at the wavelet tree root $v$ and, if $c$ descends by the left child, we recursively go left with $i \gets \rank_0(W,i)$; otherwise we recursively go right with $i \gets \rank_1(W,i)$. When we arrive at the leaf $c$, the current value of $i$ is the answer. Furthermore, the number of leaf positions to the left of $c$ is precisely $C[c]$, which directly gives the values of the LF and the backward search formulas (Eqs.~(\ref{eq:lf}) to (\ref{eq:bwdsearch2})).

\begin{figure}[t]
    \centering
    \includegraphics[width=0.48\textwidth]{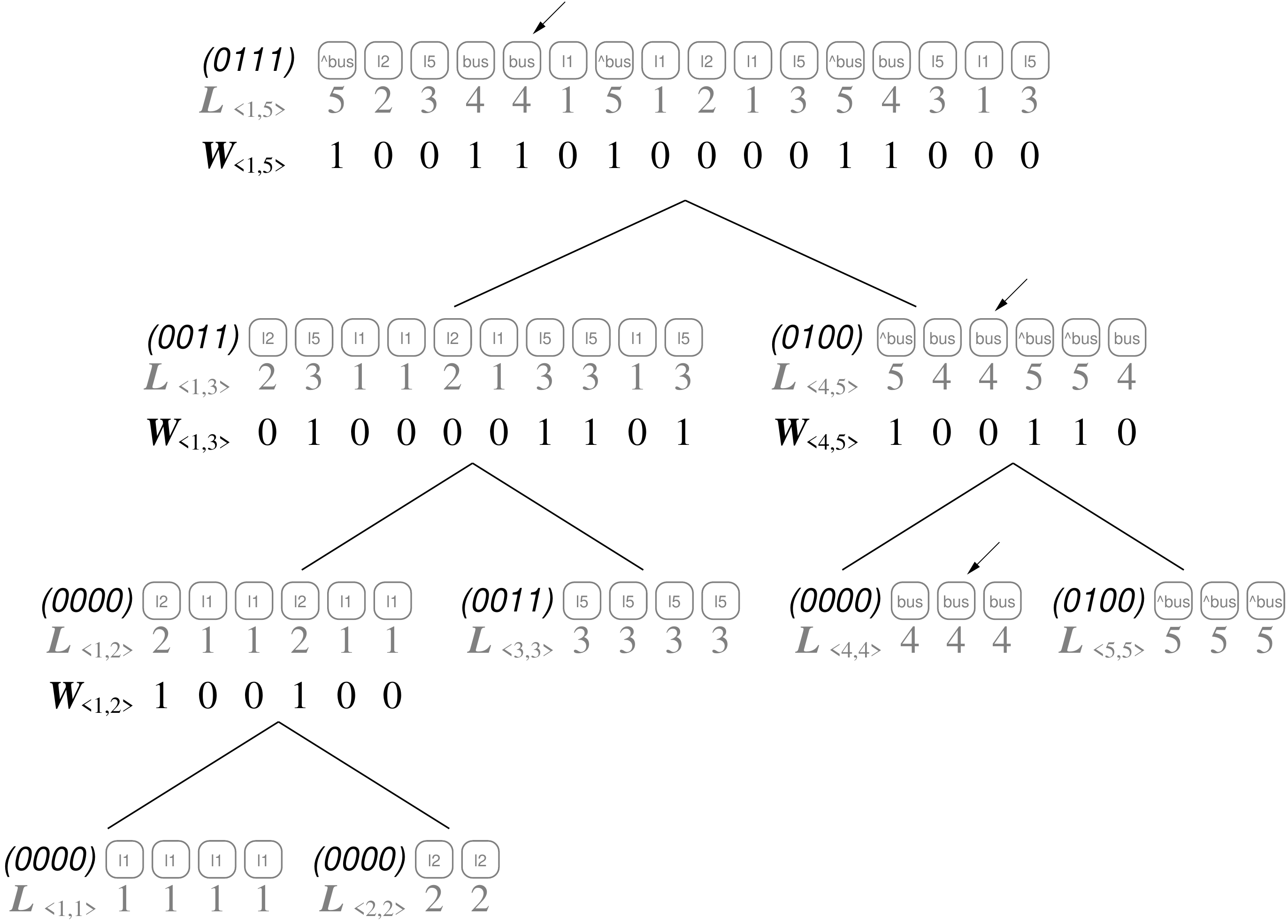}
    \caption{The wavelet tree of the sequence $L_\mathrm{p}$ of Fig.~\ref{fig:bwt}. The short diagonal arrows track $L[5]$. The slanted bitvectors on the nodes refer to the $B$ entries of the automaton of Fig.~\ref{fig:rpqglus}.}
    \label{fig:wt}
\end{figure}

\begin{example}
Fig.~\ref{fig:wt} shows the wavelet tree of sequence $L_\mathrm{p}$ for our running example (ignore the slanted bitvectors for now). To compute $\rank_4(L_\mathrm{p},5)$, we start at position $i\gets 4$ of the root (the short diagonal arrows track our position along the tree). Since leaf $4$ is to the right, we go right and set $i \leftarrow \rank_1(W_{\langle 1,5\rangle},5) = 3$. On the right child of the root, we see that leaf $4$ descends to the left, so we go left with $i \leftarrow \rank_0(W_{\langle 4,5\rangle},3)=2$, arriving at the leaf of $4$. Thus $\rank_4(L_\mathrm{p},5)=i=2$. The lengths of all the leaves to the left add up to $C_\mathrm{p}[4]=10$, so adding $i$ we obtain position $12=\LF_\mathrm{p}(5)$.
\end{example}

Wavelet trees can be used for many other purposes \cite{GNP12,Nav14}. We will indeed make use of their extended capabilities for our algorithm. A good warmup is the following algorithm to enumerate all distinct symbols in $L[b\dd e]$: We start at the root and descend to the left with the interval $L_{\langle 1,\sigma/2\rangle}[b'\dd e']$, where $b' = \rank_0(W,b-1)+1$ and $e'=\rank_0(W,e)$. We also descend to the right with the interval $L_{\langle \sigma/2+1,\sigma\rangle}[b''\dd e'']$, where $b'' = \rank_1(W,b-1)+1$ and $e''=\rank_1(W,e)$. We abandon every empty interval and instead report every leaf we arrive at (we later exemplify more complex variants of this algorithm). The total time is then $O(\log\sigma)$ per distinct symbol reported, irrespective of the total number of symbols.


\section{Our Approach}
\label{sec:our-approach}

We will use part of the ring's structure to navigate backwards all the paths that match a given 2RPQ. More precisely, we use the wavelet trees representing sequences $L_\mathrm{p}$ and $L_\mathrm{s}$, as well as all the arrays $C_*$.

The sets of subjects and objects are equal and correspond to the nodes $V$
in the graph; each node may act as a subject (i.e., edge source) or as an 
object (i.e., edge target). The set of predicates $P \subseteq \Cc$ corresponds to the edge labels of $\comp{G}$.

We will first focus on 2RPQs of the form $(x,E,o)$, where $x \in \V$ and $o \in V$. We will build the Glushkov automaton for $E$ and use it to navigate backwards, from objects towards subjects. Since we use the NFA backwards, we will start from its final states, $D=F$, use the reverse Glushkov simulation of Eq.~(\ref{eq:revglus}), and report a valid binding $x=s$ at every node $s \in V$ where the initial NFA state is activated. The navigation will start from the range of $o$ in $L_\mathrm{p}$.

This technique also handles 2RPQs of the form $(s,E,y)$, where $s \in V$ and $y \in \V$, by simply reversing $E$ and searching instead for $(y,\inv{E},s)$.
We will later consider the other kinds of 2RPQs.

We note
that since the alphabet of $E$ is $P$, our vector $B[1\dd|P|]$ for the bit-parallel NFA simulation is of size $O(|P|)$, but
still preprocessing the RPQ takes time $O(2^m)$ with lazy initialization. 
This adds a working space usage of $O(2^m + |P|)$ on top of the ring.

\begin{figure}
    \centering
    \includegraphics[width=0.25\textwidth]{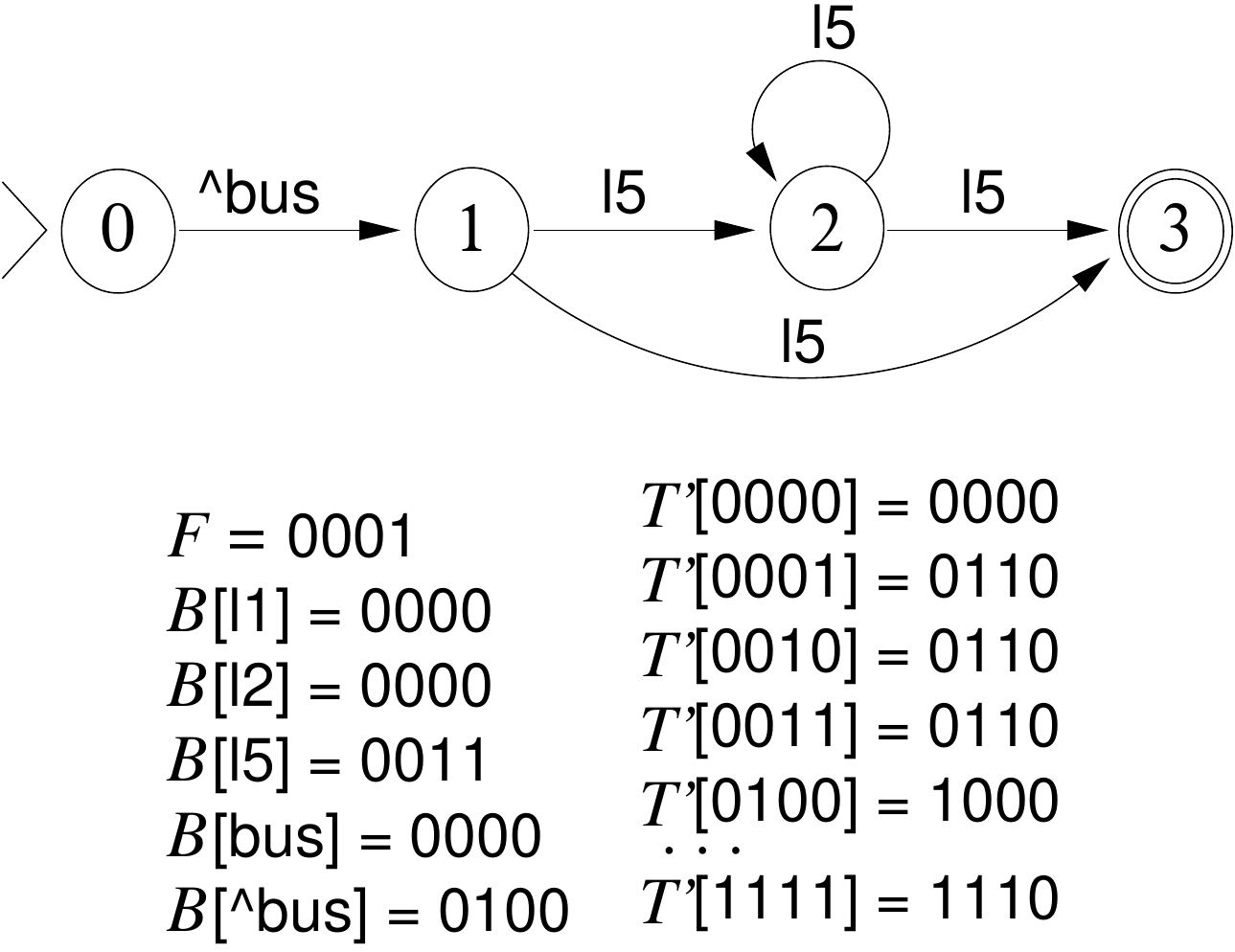}
    \caption{The Glushkov automaton for the regular expression {\rm $\textsf{\^{}bus} / \mathsf{l5}^* / \mathsf{l5}$}, its bitvector $F$ and table $B$, and the transition table $T'$ of its reversed automaton.}
    \label{fig:rpqglus}
\end{figure}

\begin{example}
Assume we are on the metro line 5 (\textsf{l5}) at station \textsf{Baq} and want to know what can we reach by following this line and then taking the bus once. The corresponding RPQ is (\textsf{Baq,l5$^+$/bus,$y$}),
and the corresponding reversed regular expression is $\inv{E} = \textsf{\^{}bus}/\mathsf{l5}^*/\mathsf{l5}$, equivalent to the example $a/b^*/b$ of Fig.~\ref{fig:glushkov}. 
 We have converted \textsf{bus} to \textsf{\^{}bus} to reverse the edge direction (we do not do the same for \textsf{l5} for simplicity, as all the metro lines bidirectional). Fig.~\ref{fig:rpqglus} shows the Glushkov automaton for this regular expression; note that
$B[\textsf{\^{}bus}]$ corresponds to $B[a]$ and $B[\textsf{l5}]$ to $B[b]$ in Fig.~\ref{fig:glushkov}, and that the alphabet of the regular expression is the set of predicates.

To solve the 2RPQ in reverse order, we start from node $5$ (\textsf{Baq}) and work backwards. We then start from $L_\mathrm{p}[C_\mathrm{o}[5]+1\dd C_\mathrm{o}[6]] = 
L_\mathrm{p}[15\dd 16]$, and report all the nodes that we can reach in reverse from there that activate the initial state of our automaton, $0$.
\end{example}

We will virtually traverse the relevant subgraph $G_E'$ of the product graph $G_E$ backwards. To simulate this process, we perform a sequence of (backward) NFA steps, traversing in reverse direction the possible paths $\rho$ that match $\inv{E}$. The traversal abandons every branch where the NFA runs out of active states. Every time it reaches the initial state we report the current node. Each NFA step starts and ends at a range of $L_\mathrm{p}$ corresponding to the current object (initially, $o$), and is simulated in three parts:
\begin{enumerate}
    \item We find all the predicates labeling edges that lead to the current object. This leads us from the interval in $L_\mathrm{p}$ (corresponding to the object) to several intervals in $L_\mathrm{s}$ (corresponding to distinct predicates for that object).
    \item We find the subjects that are sources of edges labeled with each such predicate. This leads us from each interval in $L_\mathrm{s}$ (corresponding to a predicate leading to our object) to several intervals in $L_\mathrm{o}$ (corresponding to distinct subjects).
    \item We regard each of those subjects as an object again, by mapping each resulting range in $L_\mathrm{o}$ to the corresponding range in $L_\mathrm{p}$. We only need $C_\mathrm{o}$ to do this, not $L_\mathrm{o}$.
\end{enumerate}
After steps 1 and 2, we abandon the branch if the resulting range is empty. After step 2, we perform the NFA transition and abandon the branch if $D=0$, that is, we run out of active states. We also report the subject if the initial state is active in $D$.

Note that, in step 1, we are only interested in predicates that lead to some node in $G_E'$. That is, we want predicates that lead not only to the current object, but also to active NFA states. In step 2, we are only interested in subjects that have not been visited before with the same NFA states, so as to avoid falling into loops of $G_E'$. 

In terms of the product graph, 
visiting a node $s$ of $G$ with a set $D$ of active NFA states corresponds to traversing {\em simultaneously} all the nodes of $G_E'$ that combine $s$ with an active state in $D$. Thus, bit-parallelism enables us to perform significantly less work than classical techniques that visit $G_E'$ node by node. Furthermore, we will combine Fact~\ref{fact:1} with the ability of wavelet trees to work on ranges of symbols to carry out steps 1 and 2 in a way that spends time {\em only on the resulting predicates and subjects}, thereby bounding our time complexity in terms of the subgraph of the size of $G_E'$, without spending any effort to discard edges that connect $G_E'$ with other nodes of $G_E$. We now describe each part in detail.

\subsection{Part one: Finding predicates from objects}
\label{sec:part-one}

The first part finds the distinct predicates
$p$ that lead to (i.e., precede in the $osp$ triples) the current range of objects.
We will use the wavelet tree of $L_\mathrm{p}$ to discover all the distinct 
predicates $p$ in $L_\mathrm{p}[b_o\dd e_o]$, as described at the end of Section~\ref{sec:wtrees}. From those, however, we are only interested
in those predicates $p$ that lead to a currently active NFA state. Those can be efficiently filtered thanks to Fact~\ref{fact:1}, which lets us confine the influence of $p$ to the table $B$ in the bit-parallel simulation. More precisely, by Eq.~(\ref{eq:revglus}), we are interested in the predicates $p$ such that $D ~\&~ B[p] \not= 0$. 

We will enhance the wavelet tree of $L_\mathrm{p}$ to efficiently find those predicates. We will have $B[\cdot]$ entries not only
for the predicates $p$, but also for all the other $|P|-1$ nodes in the wavelet tree of $L_\mathrm{p}$: Let $v$ be a wavelet tree node, then $B[v]$ will be the bitwise-or of
the $B[p]$ entries of all the symbols $p$ descending from $v$. 

\begin{example}
The $B[\cdot]$ entries for all the nodes of the wavelet tree of $L_\mathrm{p}$ are written as slanted bitvectors on the nodes in Fig.~\ref{fig:wt}. Those on the leaves correspond to the entries in Fig.~\ref{fig:rpqglus}, and those on internal nodes to the bitwise-or of their children.
\end{example}

This extension is easily built from the $B[p]$s in $O(m\log |P|)$ time with lazy 
initialization, by starting with all $B[v]=0$ and working upwards only from 
the nonzero entries $B[p]$, doing $B[v] \gets B[v] ~|~ B[p]$ for
every ancestor $v$ of $p$. The extra space is still $O(|P|)$, and we can
conveniently store the entries $B[v]$ in heap order, following the shape of
the (perfectly balanced) wavelet tree of $L_\mathrm{p}$.

With this extension of $B$, we proceed as follows. We start from the root $v$
of the wavelet tree of $L_\mathrm{p}$, with the range $[b\dd e] = [b_o\dd e_o]$ and
bitvector $D$. If $D ~\&~ B[v] = 0$, we stop. Otherwise, if $v$ is a leaf $p$, 
then we report the interval $L_\mathrm{s}[b\dd e]$. Otherwise, we recursively continue 
with the left and right children $v_l$ and $v_r$ of $v$, with the intervals 
$[b\dd e] = [\rank_0(W,b-1)+1\dd \rank_0(W,e)]$ for $v_l$ and
$[b\dd e] = [\rank_1(W,b-1)+1\dd \rank_1(W,e)]$ for $v_r$.

\begin{example}
To start the search from $L_\mathrm{p}[14\dd 15]$ and $D=0001$, we must first find all distinct values in the range that label transitions leading to an state active in $D$. We start from the wavelet tree root $v_{\langle 1,5\rangle}$ of Fig.~\ref{fig:wt}, with the range $L_{\langle 1,5\rangle}[14\dd 15]$. We descend to the left child, $v_{\langle 1,3\rangle}$ since $B[v_{\langle 1,3\rangle}] ~\&~ D = 0011 ~\&~ 0001 \not= 0000$ and thus there are relevant transition labels below it. When descending, we map
the range to $L_{\langle 1,3 \rangle}[9\dd 10]$ (because $\rank_0(W_{\langle 1,5\rangle},$ $14-1)+1=9$ and $\rank_0(W_{\langle 1,5\rangle},15)=10$). 
From $v_{\langle 1,3\rangle}$, we do not descend to $v_{\langle 1,2\rangle}$ since $B[v_{\langle 1,2\rangle}] ~\&~ D = 0000 ~\&~ 0001 = 0000$ and thus no relevant transition labels descend from it (though there is a $1$ in our range $L_{\langle 1,3\rangle}[9\dd 10]$ indicating an \textsf{l1} reaching \textsf{Baq}, it does not lead to active NFA states). Instead, we descend to $v_{\langle 3,3\rangle}$ because $B[v_{\langle 3,3\rangle}] ~\&~ D = 0011 ~\&~ 0001 \not= 0000$. Since it is a leaf, we have found a relevant label ($3$, i.e., \textsf{l5}) reaching our range (i.e., \textsf{Baq}). Its range is $L_{\langle 3,3\rangle}[4\dd 4]$, which added to the number of leaves in \textsf{l1} and \textsf{l2} (equivalent to $C_\mathrm{p}[3]=6$) yields the range $L_\mathrm{s}[10\dd 10]$, completing the backward search step for symbol \textsf{l5} (recall Eqs.~(\ref{eq:bwdsearch}) and (\ref{eq:bwdsearch2})).

On the other hand, we do not descend from $v_{\langle 1,5\rangle}$ to its right child, $v_{\langle 4,5 \rangle}$, because $B[v_{\langle 4,5\rangle}] ~\&~ D = 0100 ~\&~ 0001 = 0000$. Even if we did, we would obtain an empty interval in $L_{\langle 4,5\rangle}$ because there are no $4$s or $5$s in $L_{\langle 1,5\rangle}[15\dd 16]$.
\end{example}

Note that, if $D ~\&~ B[v] \not= 0$, then the same holds for at least one of the two children of $v$. As a consequence, all the wavelet tree nodes we traverse are ancestors of qualifying leaves. Since
we spend constant time on each such ancestor, we can bound the total cost of this part by charging $O(\log|P|)$ to each useful predicate $p$, for which we report the interval $L_\mathrm{s}[b_p\dd
e_p]$. We do not pay any extra cost on the useless predicates thanks to Fact~\ref{fact:1}, because we must intersect every $B[p]$ with the same set $D$ of active states.

In terms of the product graph traversal, where we are simultaneously processing all the nodes that combine $o$ with the active states in $D$; this technique allows us to obtain all the distinct edges of $G_E'$ that we can traverse from the current nodes of $G_E'$.

\subsection{Part two: Finding subjects from predicates}

The second part of the step starts at each of the ranges $L_\mathrm{s}[b_p\dd e_p]$ 
reported by the first part, and traverses the wavelet tree of $L_\mathrm{s}$ to find 
all the distinct subjects $s$ in that range, mapping them to an interval
$L_\mathrm{o}[b_s\dd e_s]$. By Fact~\ref{fact:1}, the set of active NFA states will be the same, 
$D \leftarrow T'[D ~\&~ B[p]]$ (Eq.~(\ref{eq:revglus})), for all those subjects. If $D$ contains the 
initial state, we report that subject $s$ starts a path of the 2RPQ (i.e., we report $(s,o)$ as an answer to the query).

\begin{example}
In our example, once we obtain the range $L_\mathrm{s}[10\dd 10]=4$ (\textsf{BA}) from edge label $3$ (\textsf{l5}), identifying the edge $\textsf{BA} \stackrel{\mathsf{l5}}{\longrightarrow} \textsf{Baq}$, we update $D \leftarrow T'[D ~\&~ B[3]] = T'[0001 ~\&~ 0011] = T'[0001] = 0110$, meaning we have activated states 1 and 2 in our NFA (see Fig.~\ref{fig:rpqglus}). This new state $D$ is independent of the subject we arrived at.
\end{example}

We need to prevent falling into loops, however: If we arrive at a subject 
$s$ with a subset of the NFA states we have already visited $s$ with, we must stop because we are repeating nodes
in the product graph. To
implement this filter efficiently, we will again exploit Fact~\ref{fact:1} and enhance the wavelet tree of $L_\mathrm{s}$. 

We will store for each subject $s$ a bitvector
$D[s]$ with all the active NFA states we have already reached $s$ with.
This adds $O(|V|)$ working space, but can be zeroed in constant time with lazy initialization. Thus, if we arrive at $s$ and $D ~|~ D[s] = D[s]$, the
subject $s$ can be skipped; otherwise we set $D \leftarrow D ~\& \sim\! D[s]$
and then $D[s] \leftarrow D ~|~ D[s]$, where ``$\sim$'' is the bitwise-not.
This leaves in $D$ only the NFA states that are new to $s$, and also adds to 
$D[s]$ the new active states we have arrived at $s$ with. We initially mark the states $F$ on the node $o$ where we start the search.

We use the same technique of storing $D[\cdot]$ entries at wavelet tree nodes 
$v$ of $L_\mathrm{s}$, so as to 
avoid descending by a branch if all the subjects below it have already been
visited with all the active states in $D$. For this, $D[v]$ must be the 
intersection of those $D[s]$ cells below $v$. If $D ~|~ D[v] = D[v]$, we prune the
wavelet tree traversal at node $v$, otherwise we set $D[v] \gets D ~|~ D[v]$ and continue by both left and right children. 
Just as for predicates, there is always a useful descendant leaf $s$ from the nodes $v$ we traverse, and so the total cost is $O(\log|V|)$
per useful subject arrived at.

Not visiting $s$ with a subset of the NFA states of previous visits ensures that we never work more than classical product graph traversals: every time we reprocess a node $s$ of $G$, we must be including a new NFA state in $D$ (instead, we can be faster because we handle several NFA states together, as explained). Again, we can efficiently filter the subjects with the wavelet tree thanks to Fact~\ref{fact:1}, because all the subjects $s$ are visited with the same set of states $D$.

\subsection{Part three: Mapping subjects back to objects}

In this third part, we report each
useful subject $s$ we must consider, with its corresponding state $D$. In order to proceed with the 
next step of the simulation, we
must map this range of subjects to the same range of nodes seen as objects.

This is easily done with the array $C_\mathrm{o}$, where $C_\mathrm{o}[s]$ is the number of symbols smaller than $s$ in $L_\mathrm{o}$. Thus, $L_\mathrm{p}[C_\mathrm{o}[s]+1\dd C_\mathrm{o}[s+1]]$ corresponds to the interval of $L_\mathrm{p}$ that is aligned to object $s$.

Then, as explained, we restart part one with each $s$ for which $C_\mathrm{o}[s+1]>C_\mathrm{o}[s]$,
with state $D$.

\begin{figure*}[t!]
    \centering
    \includegraphics[width=0.75\textwidth]{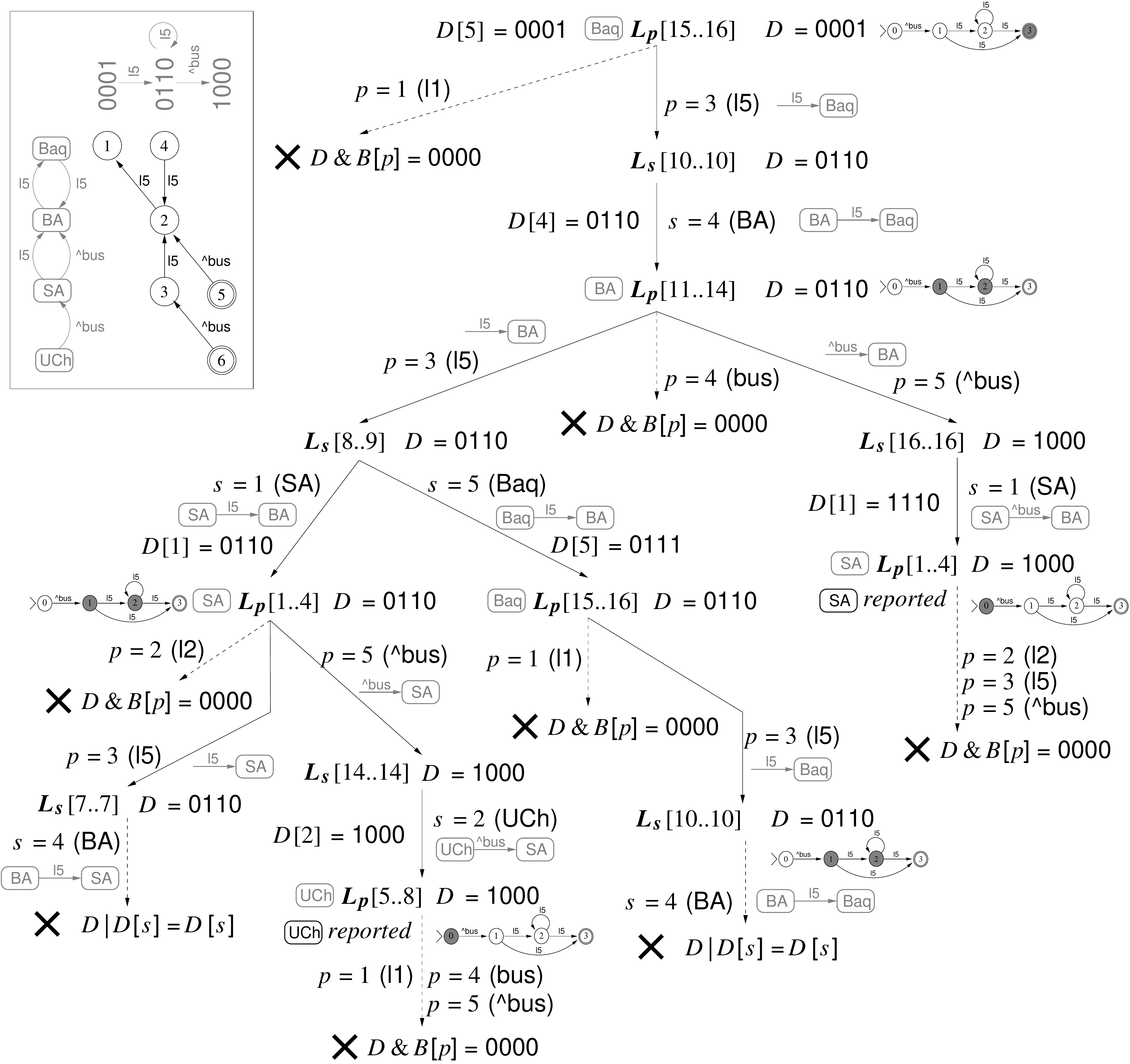}
    \caption{The whole process to match the RPQ of Fig.~\ref{fig:rpqglus} in our graph of Figs.~\ref{fig:dg2} and \ref{fig:bwt}. On the top left, the visited part of the product graph, showing in the columns the sets $D$ of NFA states we visit (and labeling the arrows for readability).}
    \label{fig:process}
\end{figure*}

\begin{example} 
Fig.~\ref{fig:process} illustrates the whole process of matching the (2)RPQ (\textsf{$y$,l5$^+$\texttt{/}\^{}bus,Baq}), with the NFA of Fig.~\ref{fig:rpqglus}, in the graph of Fig.~\ref{fig:dg2}, as represented in Fig.~\ref{fig:bwt}. We use a top-down tree to represent the branching of the process, and also show on the top left the states of the product graph $G_E'$ we traverse (backwards). We will use BFS traversal. 
The top nodes of the tree illustrate what we have already done in previous examples (edge $1 \rightarrow 2$ of $G_E'$): starting from $L_\mathrm{p}[14\dd 15]$ (\textsf{Baq}) and $D=0001$, we identified the only edge label reaching that node, \textsf{l5}, that is relevant in our NFA. Note that the label \textsf{l1} also appears in $L_\mathrm{p}[15\dd 16]$ because a transition labeled \textsf{l1} reaches \textsf{Baq}, but $D ~\&~ B[\textsf{l1}] = 0000$ because our NFA does not match it; this pruned branch is shown with a dashed arrow leading to an \textsf{X}. We have also seen that the only source of those edges labeled \textsf{l5} is $\textsf{BA}$, at $L_\mathrm{s}[10\dd 10]$, where the NFA is active at states $0110$. We do move to \textsf{BA} because $D[\textsf{BA}]=D[4]=0000$, so $D=0110$ contains some unseen NFA states at this node. We then mark $D[4]=0110$ to indicate that we have already reached $\textsf{BA}$ with those active NFA states. In part 3 of our process, we find the interval of $L_\mathrm{p}$ corresponding to $L_\mathrm{s}[10]=4$ (\textsf{BA}), $L_\mathrm{p}[C_\mathrm{o}[4]+1\dd C_\mathrm{o}[5]] = L_\mathrm{p}[11\dd 14]$, and so complete one step.

Three symbols appear on $L_\mathrm{p}[11\dd 14]$ (i.e., three edge labels reach \textsf{BA}), but only \textsf{l5} (left child) and \textsf{\^{}bus} (right child) match our NFA. By \textsf{l5} we reach $L_\mathrm{s}[8\dd 9]$ using backward search. In this interval we find two sources that, by \textsf{l5}, reach \textsf{BA}: \textsf{SA} (left child) and \textsf{Baq} (right child), both with NFA state $D=0110$ (the same as before). On the other hand, by \textsf{\^{}bus}, we reach $L_\mathrm{s}[16\dd 16]$ using backward search. There we find the only source, \textsf{SA}, that reaches \textsf{BA}, with NFA state $D=1000$. We process the three sources in BFS order, left to right:
\begin{enumerate}
\item By \textsf{l5} we reach \textsf{BA} from \textsf{SA} (leftmost tree node in this level). We accept going to \textsf{SA}, as $D[\textsf{SA}]=D[1]=0000$ and $D=0110$ has new states, so we set $D[1]=0110$. In part 3 we get the interval $L_\mathrm{p}[1\dd 4]$ for \textsf{SA}. This is edge $2 \rightarrow 3$ of $G_E'$.
\item By \textsf{l5} we reach \textsf{BA} from \textsf{Baq} (middle tree node in this level). Although we had already seen \textsf{Baq}, it was only with states $D[\textsf{Baq}]=D[5]=0001$, so the current state $D=0110$ has some unvisited NFA states; we set $D[5]=0111$ and part 3 leads us to $L_\mathrm{p}[15\dd 16]$. This is edge $2 \rightarrow 4$ of $G_E'$.
\item By \textsf{\^{}bus} we reach \textsf{BA} from \textsf{SA} as well (rightmost tree node in this level). Since $D[\textsf{SA}]=D[1]=0110$ and $D=1000$, we have new states and accept going to \textsf{SA}, setting $D[1]=1110$. The NFA state is still $1000$ (edge $2 \rightarrow 5$ of $G_E'$), which contains the initial state, so we report \textsf{SA} as a solution to our 2RPQ. We then continue from it, reaching $L_\mathrm{p}[1\dd 4]$ using part 3.
\end{enumerate}

Our BFS traversal now branches from each of those three nodes: 
\begin{enumerate}
\item From $L_\mathrm{p}[1\dd 4]$ (\textsf{SA}) with $D=0110$, we find edges labeled \textsf{l1}, \textsf{l5}, and \textsf{\^{}bus} leading to it:
\begin{enumerate}
    \item Our NFA states are not reachable by \textsf{l2} ($D ~\&~ B[\textsf{l2}]=0000$), so we abandon this branch.
    \item By \textsf{l5} we find the source \textsf{BA}, but since $D[\textsf{BA}]=D[4]=0110$ and $D=0110$, we have already visited \textsf{BA} with those active states, so we prune this branch too, avoiding a loop.
    \item By \textsf{\^{}bus} we reach $L_\mathrm{s}[14\dd 14]$ with state $D=1000$. The only source here is $L_\mathrm{s}[14]=2=$ \textsf{UCh}. Since $D[\textsf{UCh}]=D[2]=0000$, we enter this state and set $D[2]=1000$ (edge $3 \rightarrow 6$ of $G_E'$). Furthermore, since $D$ contains the initial state, we report \textsf{UCh} as the second solution to the 2RPQ.
\end{enumerate}
\item From $L_\mathrm{p}[15\dd 16]$ (\textsf{Baq}) with $D=0110$, we find edges labeled \textsf{l1} and \textsf{l5} leading to it:
\begin{enumerate}
    \item Our NFA is not interested in \textsf{l1}, so we abandon this branch.
    \item By \textsf{l5} we reach \textsf{BA} again, and again we prune the branch to avoid falling into loops, because $D[\textsf{BA}]=D[4]=0110$.
\end{enumerate}
\item From $L_\mathrm{p}[1\dd 4]$ (\textsf{SA}) and $D=1000$, which we had reported, the NFA has nowhere to go, so we reject all the possible edge labels, \textsf{l2}, \textsf{l5}, and \textsf{\^{}bus}. The same happens in the last tree level from $L_\mathrm{p}[5\dd 8]$ (\textsf{UCh}) after reporting it, so we finish.
\vspace*{-5mm}
\end{enumerate}
\end{example}

\subsection{Other kinds of RPQs}

The algorithm we have described reports all the subjects (i.e., nodes) $s \in V$
for which there is a path matching $E$ towards some object in the
range we started with. If we start with the $L_\mathrm{p}$ range for a single object
$o$ (i.e., solving $(x,E,o)$ with $x \in \V$ and $o \in V$), then the answers to the query are all the pairs $(s,o)$.
We can use this same algorithm for solving $(s,E,y)$, where $s \in V$ and $y \in \V$, by converting it into $(y,\inv{E},s)$ (as we did in our running example), so that we find all the objects $o$ reachable from a given subject $s$. 

We can also handle the case where both $s,o \in V$ are fixed, by starting from $o$ and processing $\inv{E}$, stopping as soon as we reach the state $s$, or until we run out of active states (or vice versa with $E$).

The most complex case, $(x,E,y)$ with $x,y\in\V$, has variables for both subject and object, where we must find all the pairs $(s,o)$ connected by a path matching $E$. We could handle this query by launching $|V|$ queries $(x,E,o)$, one per possible object $o$, but this would be very inefficient if many objects do not lead to answers $(s,o)$. Instead, we will use the ability of the BWT and of wavelet trees to work on ranges of symbols, not only on individual ones. Instead of starting with each specific object $o$, we will start with the full range in $L_\mathrm{p}$. Exactly the same algorithm we have described for queries $(x,E,o)$, now started with the full range, obtains all
the subjects $s$ leading to {\em some} object by $E$. Then, for every subject $s$
we arrived at, we run the RPQ $(s,E,y)$, and report $(s,o)$ for each object $o$ found in this search.
This technique guarantees that we run the queries $(s,E,y)$ only from subjects that will produce some result. Alternatively, we can first find the objects $o$ that are reachable with $E$ from some subject, and then run only the useful queries $(x,E,o)$.

The first step of this solution, which starts from the full $L_\mathrm{p}$ range, uses the power of the ring to handle a range of nodes simultaneously, effectively traversing in one step a set of nodes of $G_E'$ that relate a number of nodes of $G$ with the same set of NFA states. That is, it provides a second speedup over a classical node-wise traversal of $G_E'$. The union of the queries $(s,E,y)$ we perform amounts to a second traversal of $G_E'$.

\subsection{Time complexity}

The following theorem shows that the cost of our algorithm is essentially bounded by the size of the subgraph $G_E'$ of the product graph $G_E$ induced by the query.

\begin{theorem} \label{thm:main}
Let $G$ be a directed labeled graph over nodes $V$ and an alphabet $P$ of edge labels. Consider an RPQ $(x,E,y)$ where $x$ or $y$ are variables, $E$ has $m$ literals, and the computer word holds $O(m)$ bits. The ring representation of $G$ can return all the matching pairs $(s,o)$ for the RPQ in time $O(2^m + m\log|P|+|G_E'|\log|G|)$, where  $G_E'$ is the subgraph of the product graph $G_E$ of $G$ and Glushkov's automaton $A$ of $E$, induced by all the paths $\rho$ from any node $(s_\mu,i)$ to any node $(o_\mu,f)$, where $\mu$ is an evaluation of $(x,E,y)$, and $i$ and $f$ are initial and final states of $A$. The working space of the query is $O(m(2^m + |P| + |V| + \max_\rho))$ bits, where $\max_\rho$ is the length of the longest path $\rho$. If the computer word holds only $O(m/d)$ bits, the above times are multiplied by $d$ and $2^m$ becomes $2^{m/d}$.
\end{theorem}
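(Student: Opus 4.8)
The plan is to split the running time into a one-time preprocessing of the regular expression $E$ and a traversal whose cost is amortized over the nodes and edges of $G_E'$, and then to bound the working space. For preprocessing, I first build Glushkov's automaton of $E$ and the reverse-simulation data $F$, $B[1\dd|P|]$ and $T'[0\dd 2^m-1]$; with lazy initialization for $B$ (only the $m$ occurring symbols get nonzero entries) this takes $O(2^m+|P|)$ time and $O(m(2^m+|P|))$ bits. I then extend $B$ to the $O(|P|)$ internal nodes of the wavelet tree of $L_\mathrm{p}$ by propagating bitwise-ors upward from the $m$ nonzero leaves, touching $O(m\log|P|)$ ancestors and adding $O(m|P|)$ bits, and I allocate the per-node bitvectors $D[\cdot]$ over $V$ and over the internal nodes of the wavelet tree of $L_\mathrm{s}$, zeroed by lazy initialization, for $O(m|V|)$ bits. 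This already yields the additive $O(2^m+m\log|P|)$ term and the $O(m(2^m+|P|+|V|))$ part of the space.

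Next I would analyze the traversal, modeling it as a depth-first search over node-states of $G_E$ whose out-steps are realized by Parts~1--3. The core observation is that one execution of Part~1 from a graph node $s$ with active set $D$ costs $O(\log|P|)$ per \emph{useful} predicate: since $D~\&~B[v]\neq 0$ implies the same for some child of $v$, every wavelet-tree node visited is an ancestor of a leaf $p$ that occurs in the current range of $L_\mathrm{p}$ and has $D~\&~B[p]\neq 0$; the same argument with the $D[\cdot]$-based pruning on the wavelet tree of $L_\mathrm{s}$ gives $O(\log|V|)$ per useful subject in Part~2, while Part~3 is $O(1)$ per subject via $C_\mathrm{o}$. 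I would then charge each useful predicate and each useful subject to a distinct edge of $G_E'$: a useful predicate at $(s,D)$ witnesses at least one edge of $G_E$ into $\{s\}\times D$, and a useful subject $s'$ reached through $p$, after forming $D'=T'[D~\&~B[p]]$ and applying the $D[\cdot]$-filter, carries at least one state new to $s'$, i.e.\ a node-state of $G_E'$ not activated by any earlier step. Since the $D[\cdot]$ cells only grow, each node-state is made newly active at most once, so the total number of useful predicates and subjects over the whole search is $O(|G_E'|)$; multiplying by $O(\log|P|+\log|V|)=O(\log|G|)$ gives the $O(|G_E'|\log|G|)$ term. For $(x,E,y)$ I would analyze the two passes separately, each being a traversal of this form; starting the first pass from the full $L_\mathrm{p}$ range is exactly what lets it handle a whole set of objects at once without ever visiting an object that yields no answer, and the remaining fixed-endpoint cases are single traversals with an early stop.

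For the working space, besides the ring the live structures are $T'$ ($O(m2^m)$ bits), the extended $B$ ($O(m|P|)$ bits), the extended $D[\cdot]$ arrays ($O(m|V|)$ bits) and the DFS stack; since the search visits each node-state at most once, the stack holds at most as many frames as the length of the longest path it explores, $O(\max_\rho)$, each frame holding an $m$-bit set $D$ and $O(1)$ ring positions, for $O(m\max_\rho)$ bits, which completes the bound $O(m(2^m+|P|+|V|+\max_\rho))$. For the narrow-word case I would invoke the standard construction of~\cite{NR04}: an $O(m)$-bit quantity spans $O(d)$ words, so every bit-parallel step costs $O(d)$ time and all running times scale by $d$, while $T'$ is split vertically into $d$ subtables of $m/d$ bits combined by bitwise-or, so the table size --- hence the additive table term --- becomes $O(d\cdot 2^{m/d})$, i.e.\ $2^m$ is replaced by $2^{m/d}$.

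The hard part will be the amortized charging in the traversal, i.e.\ arguing that no effort is spent outside $G_E'$. Two points need care. First, the monotone $D[\cdot]$-pruning must be shown both to rule out infinite loops and to forbid re-processing an already-active node-state, so that the count of useful predicates and subjects is $O(|G_E'|)$ rather than $O(|V|\,|Q|)$; this follows cleanly from the fact that the $D[\cdot]$ cells never lose bits. Second, one must show that the backward search from a final configuration (and, with two variables, the full-range first pass together with the per-subject second passes) reaches only node-states that actually lie on some accepting path; here the regularity of Glushkov's automaton --- in particular that the initial state has no incoming transitions --- together with the two-pass design are what let us identify the explored region with $G_E'$ rather than a larger backward-reachable region.
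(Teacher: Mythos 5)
Your proposal follows essentially the same route as the paper's proof: the same preprocessing accounting (lazy-initialized $B$ and its wavelet-tree extension in $O(m\log|P|)$, the $O(2^m)$ bit-parallel tables), the same charging of $O(\log|P|)$ per useful predicate to edges of $G_E'$ and $O(\log|V|)$ per useful subject to newly activated node-states (made unique by the monotone $D[\cdot]$ marks), the same space breakdown including the $O(m\max_\rho)$ bits for the recursion, the same word-splitting for narrow words, and the same reduction of the other query types to single traversals or to the two-pass scheme for $(x,E,y)$. Two small remarks: the preprocessing time should be stated as $O(2^m)$ rather than $O(2^m+|P|)$, since lazy initialization is precisely what removes the $|P|$ term (it appears only in the working space), and the confinement of the explored region to $G_E'$ that you single out as the remaining hard part is exactly the step the paper itself asserts without further argument (excluding only the constant-to-constant case on these grounds), so your treatment matches the paper's level of rigor there.
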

\begin{proof}
The algorithm virtually visits the nodes of $G_E'$ in reverse order. Let us first consider the query $(x,E,o)$ for $o \in V$. The algorithm starts simultaneously from all the nodes $(o,f) \in G_E'$, for any final NFA state $f \in F$ (let us regard $F$ and $D$ as sets of NFA states). The algorithm preserves the invariant that, if it is at node $v \in V$ with the active NFA states $D$, then it is the first time it simulates the visit of the node $(v,d) \in G_E'$ for any NFA state $d \in D$. Each transition to new states $(v',d')$ is done in three parts. In the first part, it finds in $O(\log |P|)$ time every distinct label $p \in P$ of edges in $G_E'$ that lead to some state $(v,d)$ for $d \in D$. This cost can be charged to the edges of $G_E'$ that lead to some state $(v,d)$, because there is at least one edge per resulting label. In the second part, for each label $p$ found, it finds in $O(\log |V|)$ time every distinct node $v' \in V$ such that we reach some current node $(v,d)$ from some unvisited node $(v',d')$ in $G_E'$ via label $p$ (it obtains simultaneously the set $D'$ of all those states $d'$). We can then charge the cost to the nodes $(v',d')$ of $G_E'$. The third part takes $O(1)$ time per
node arrived at, which becomes the current node in the next iteration.

The total cost is then $O(|G_E'|\log|G|)$ for the traversal. Glushkov's construction takes $O(m^2)$ time to mark all the bits in $B[1\dd|P|]$ after the constant-time lazy initialization of $B$. The construction of the bit-parallel tables  takes time $O(2^m)$, dominated by table $T$. Computing the cells $B$ for the internal wavelet tree nodes of $L_\mathrm{p}$ adds $O(m\log|P|)$ time, again using lazy initialization, because only $O(m)$ wavelet tree leaves have nonzero cells in $B$. The lazy initialization of $D$ for the wavelet tree nodes of $L_\mathrm{s}$ adds $O(1)$ time. 

The working space is $O(m2^m)$ bits for the bit-parallel simulation of the NFA, $O(m(|P|+|V|))$ bits for the tables $B$/$D$ on the wavelet tree nodes of $L_\mathrm{p}$/$L_\mathrm{s}$, $O(|P|+|V|)$ bits for the compact structures for the lazy initialization of the tables $B/D$ \cite[App.~C]{Nav14b}, and $O(m\max_\rho)$ bits for the recursive path traversals carrying the active states $D$.

If the computer word cannot hold $O(m)$ bits, we split the words into $O(d)$ words of $O(m/d)$ bits each and operate on each word sequentially, thereby multiplying times by $O(d)$.

All the other types of queries $(x,E,y)$ are reduced to the query $(x,E,o)$ we have considered: in case $(s,E,y)$, where $s \in V$ and $y \in \V$, we just reverse the query; in case $(x,E,y)$ where $x,y \in \V$, we perform many queries $(s,E,y)$, which subsume the cost of the initial query that finds the relevant nodes $s$ from any $o$. Here $G_E'$ is the union of the graphs $G_E'$ for every $s$. The only case where we may visit more nodes than those in $G_E'$ is the query $(s,E,o)$ with both $s,e \in V$, because we visit nodes that may not lead to $s$. This is why this case is left out of the theorem.
\end{proof}

The theorem does not capture the fact that we are able to process several NFA states $d$ simultaneously when traversing the nodes $(v,d)$ of $G_E'$, thanks to the bit-parallel simulation of the NFA. 

\begin{example}
Fig.~\ref{fig:product2} shows the product graph $G_E$ of our running example, highlighting in bold the nodes and edges of $G_E'$. The dashed edges also belong to $G_E'$ but we avoid them to prevent loops. The shaded nodes correspond to the reported results. Theorem~\ref{thm:main} proves that we spend, at worst, logarithmic time per node and edge of $G_E'$. Comparing the figure with the top-left part of Fig.~\ref{fig:process}, however, one can see that our simulation processes the nodes of the second and third column of $G_E$ simultaneously (in column $0110$ of Fig.~\ref{fig:process}). We maintained in Fig.~\ref{fig:product2} the numbering of the nodes of Fig.~\ref{fig:process}, which helps see the nodes we visit simultaneously.
\end{example}

\begin{figure}[t!]
    \centering
    \includegraphics[width=0.22\textwidth]{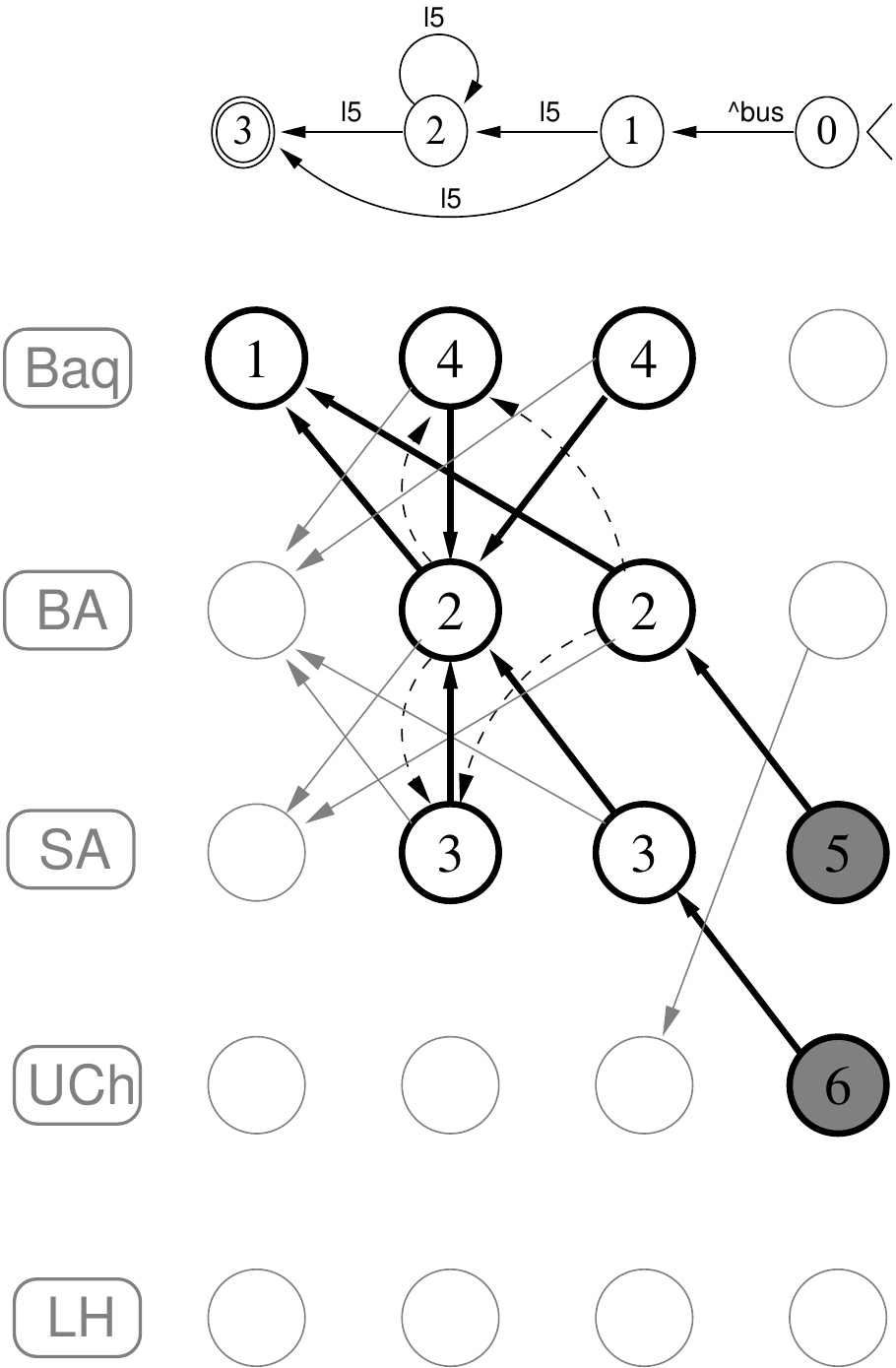}
    \caption{The product graph $G_E$ of the graph of Fig.~\ref{fig:dg2} and the automaton of Fig.~\ref{fig:rpqglus}, highlighting in black the subgraph $G_E'$ that is traversed (backwards) in Fig.~\ref{fig:process}.}
    \label{fig:product2}
\end{figure}

\section{Implementation and Experiments}

We implemented our scheme in C++11 using the succinct data structures library (\textsc{sdsl}, \url{https://github.com/simongog/sdsl-lite}). 
We ran our experiments on an Intel(R) Xeon(R) CPU E5-2630 at 2.30GHz, with 6 cores, 15 MB of cache, and 96 GB of RAM. Our code was compiled using \texttt{g++} with flags \texttt{-std=c++11}, \texttt{-O3}, and \texttt{-msse4.2}. 

\paragraph{Benchmark.}
We test our approach on a real-world benchmark: a Wikidata graph~\cite{VrandecicK14} of $n = 958{,}844{,}164$ edges and $|V| = 348{,}945{,}080$ nodes. This graph has
$|S| = 106{,}736{,}662$ subjects, $|P| = 5{,}419$ predicates, and $|O| = 295{,}611{,}216$ objects. This dataset occupies 10.7 GB in plain form (i.e., with 32-bit integers for each triple component, or 12 bytes per tuple) and 7.9 GB in packed form (i.e., using $\lceil \log |S|\rceil+\lceil\log |P|\rceil+\lceil\log|O|\rceil$ bits, or 8.625 bytes, per triple). 

We compare with the following graph database systems, in terms of the space used for indexing and the time for evaluating RPQs:\footnote{While ArangoDB, Neo4j, OrientDB and TigerGraph support various path features -- such as bounded traversals, shortest paths, Kleene star, paths with per-edge filters, etc. -- to the best of our knowledge they do not support RPQs declaratively.}
\begin{description}
\item[Jena:] A reference implementation of the SPARQL standard. 
\item[Virtuoso:] A widely used graph database hosting the public DBpedia endpoint, among others~\cite{Virtuoso}. 
\item[Blazegraph:] The graph database system~\cite{Blazegraph} hosting the official Wikidata Query Service~\cite{MalyshevKGGB18}. 
\end{description}

Systems are then configured per vendor recommendations, as in previous work~\cite{AHNRRS21}. Jena, Virtuoso and Blazegraph all implement RPQs per the semantics of property paths in SPARQL~1.1, whereby fixed-length paths (without \texttt{*} or \texttt{+}) are translated into SPARQL graph patterns without RPQs and evaluated under bag semantics. All systems apply set semantics for arbitrary-length paths, per the SPARQL standard. Jena and Blazegraph implement a navigational BFS-style function called ALP (Arbitrary Length Paths) defined by the SPARQL standard~\cite{sparql11}, while Virtuoso uses a transitive closure operator implemented over its relational database engine.

In order to test on challenging, real-world RPQs, we extracted all non-trivial RPQs (i.e., not a simple label) from the code-\texttt{500} (timeout) sections of all seven intervals of the the Wikidata Query Logs~\cite{MalyshevKGGB18}. After filtering RPQs mentioning constants not used in the dataset, normalizing variable names, and removing duplicates, this process yielded 1{,}952 unique queries. All queries are run with a timeout of 60 seconds under set semantics (using \texttt{DISTINCT} in the case of SPARQL) with a limit of 1 million results for comparability (Virtuoso has a hard-coded limit of $2^{20}$ results). 
We classify the RPQs of our log into patterns by mapping nodes to constant/variable types and erasing their predicates (keeping only RPQ operators); for example, ($x$, $p_1/p_2^*$, $y$) has the pattern $\mathtt{c~/*~v}$, $\mathtt{v~/*~c}$, or $\mathtt{v~/*~v}$, depending on whether $x$ and $y$ are constant ($\mathtt{c}$) or variable ($\mathtt{v}$). Table~\ref{tab:rpq-patterns} shows the 20 most popular RPQ patterns in our log.  

\begin{table}
\centering
\caption{The 20 most popular RPQ patterns in our query log.}
\label{tab:rpq-patterns}

\begin{tabular}{rr}
\toprule
\multicolumn{1}{l}{1st--7th} & \multicolumn{1}{c}{\#}\\
\midrule
\texttt{v /* c} & 537\\
\texttt{v * c} & 433\\
\texttt{v + c} & 109\\
\texttt{c * v} & 99\\
\texttt{c /* v} & 95\\
\texttt{v / c} & 54\\
\texttt{v */* c} & 44\\
\bottomrule
\end{tabular}
\hfill
\begin{tabular}{rr}
\toprule
\multicolumn{1}{l}{8th--14th} & \multicolumn{1}{c}{\#}\\
\midrule
 \texttt{v / v} & 41\\ 
\texttt{v |* c} & 36\\ 
\texttt{v | v} & 31\\
\texttt{v */*/*/*/* c} & 28\\
\texttt{v \^{} v} & 26\\
\texttt{v /* v} & 25\\
\texttt{v * v} & 25\\
\bottomrule
\end{tabular}
\hfill
\begin{tabular}{rr}
\toprule
\multicolumn{1}{l}{15th--20th} & \multicolumn{1}{c}{\#}\\
\midrule
\texttt{v /? c} & 22\\
\texttt{v + v} & 17\\
\texttt{v /+ c} & 12\\
 \texttt{v || v} & 10 \\ 
\texttt{v | c} & 10\\ 
\texttt{v /\^{} v} & 7\\
\ \\
\bottomrule
\end{tabular}

\no{
\begin{tabular}{rr}
\toprule
Path pattern (1--10) & \multicolumn{1}{c}{Count}\\
\midrule
\texttt{v /* c} & 537\\
\texttt{v * c} & 433\\
\texttt{v + c} & 109\\
\texttt{c * v} & 99\\
\texttt{c /* v} & 95\\
\texttt{v / c} & 54\\
\texttt{v */* c} & 44\\
 \texttt{v / v} & 41\\ 
\texttt{v |* c} & 36\\ 
\texttt{v | v} & 31\\
\bottomrule
\end{tabular}
\hfill
\begin{tabular}{rr}
\toprule
Path pattern (11-20) & \multicolumn{1}{c}{Count} \\
\midrule
\texttt{v */*/*/*/* c} & 28\\
\texttt{v \^{} v} & 26\\
\texttt{v /* v} & 25\\
\texttt{v * v} & 25\\
\texttt{v /? c} & 22\\
\texttt{v + v} & 17\\
\texttt{v /+ c} & 12\\
 \texttt{v || v} & 10 \\ 
\texttt{v | c} & 10\\ 
\texttt{v /\^{} v} & 7\\
\bottomrule
\end{tabular}
}
\end{table}

\paragraph{Index construction}
We work with a dictionary-encoded version of the graph as described in Section~\ref{sec:our-approach}, where we complete the graph by adding the reversed edges with inverse labels: If an edge is labeled with predicate $p$, its reverse edge has predicate $\inv{p}=p + |P|$. This doubles the number of edges and the number of predicates.
To construct our index, we build arrays $L_{\mathrm{s}}$ and $L_{\mathrm{p}}$ (and the corresponding $C_{\mathrm{p}}$ and $C_{\mathrm{o}}$) using a suffix array \cite{AHNRRS21}. We represent $L_{\mathrm{s}}$ and $L_{\mathrm{p}}$ using wavelet matrices \cite{CNO15}, a particular implementation of wavelet trees to handle big alphabets efficiently. We use plain bitvectors to implement the wavelet-matrix nodes. Array $C_{\mathrm{o}}$ is represented using a plain bitvector, whereas $C_{\mathrm{p}}$ is represented as a simple array. 
Our index is constructed in 2.3 hours, using 64.75 GB of RAM.

\paragraph{Implementing queries}
We use our generic query algorithm of Section~\ref{sec:our-approach}, but handle
the query patterns $\texttt{v\^{}v}$, $\texttt{v/\^{}v}$, $\texttt{v|v}$, $\texttt{v||v}$, and $\texttt{v/v}$ more efficiently using just backward search and the extended functionality of wavelet trees:
For a variable-to-variable query $(x, p, y)$ (analogously, $(x, \inv{p}, y)$), we start by extracting all subjects $s$ from $L_{\mathrm{s}}[C_{\mathrm{p}}[p]{..}C_{\mathrm{p}}[p+1]-1]$, using the wavelet tree. Then, for each $s$ in that range, we start at range
$[C_{\mathrm{o}}[s]{..}C_{\mathrm{o}}[s]-1]$ in $L_{\mathrm{p}}$ and carry out a backward search step using $\inv{p}$. This yields the range of $L_{\mathrm{s}}$ containing all values $o$ such that $(s, p, o)$ is a graph edge, so we report $(s, o)$.
Query $(x, p_1|p_2, y)$ (similarly, $(x, p_2| p_3| p_4, y)$) is decomposed into queries $(x, p_1, y)$ and $(x, p_2, y)$, which are computed as explained before. To detect duplicate pairs $(s,o)$, we use a hash table (\texttt{std::unordered\_set} in C++).
For query $(x, p_1/p_2, y)$ (similarly, $(x, p_1/\inv{p_2}, y)$) we first find all nodes $z$ that are the target of an edge labeled $p_1$, and the origin of an edge labeled $p_2$. This is done by intersecting the ranges $L_{\mathrm{s}}[C_{\mathrm{p}}[\inv{p_1}]{..}C_{\mathrm{p}}[\inv{p_1}+1]-1]$ and
$L_{\mathrm{s}}[C_{\mathrm{p}}[p_2]{..}C_{\mathrm{p}}[p_2+1]-1]$, using the wavelet tree capabilities \cite{GNP12}. Then, for every such $z$ in the intersection, we carry out a backward search for $p_1 z$, to find all nodes $s$ such that $(s, p_1, z)$ is a graph edge. Similarly, we do a backward search for $\inv{p_2} z$, to find all nodes $o$ such that $(z, p_2, o)$ is a graph edge. Then, for every such $s$ and $o$ we report $(s, o)$, again avoiding duplicates.
Finally, for queries $(x, p_1/(p_2)^*, y)$ we start the search always from $p_1$. In general, this filters candidates more efficiently For all the remaining queries $(x, E, y)$, we choose to start from the end whose predicate has the smallest cardinality.

We implement array $B$  (used to filter on $L_{\mathrm{p}}$ in Section \ref{sec:part-one}) with an array of integers, initially zeroed. We do lazy initialization by setting the values of the different 
predicates of the query and their wavelet matrix ancestors, and zeroing them again after running the query.
Array $D$, on the other hand, is implemented using the a compact lazy initialization structure \cite[App.~C]{Nav14b}, which uses $O(|V|)$ extra bits on top of $D$. We use 16-bit cells for $D$, as queries in our log have fewer than 16 predicates (with some few exceptions that use operator \texttt{|}, which are handled differently as explained).

 

\begin{table}
\centering
\caption{Index space (in bytes per edge) and some statistics on the query times (in seconds). Row ``Timeouts'' counts the queries taking over 60 seconds.}
\label{tab:statistics}
%
%
\begin{tabular}{l@{\hskip 5mm}rrrr}
\toprule
  & Ring &  Jena & Virtuoso & Blazegraph \\
\midrule
Space & 16.41 & 95.83 & 60.07 & 90.79\\ 
\midrule
Average  & 3.73 & 9.93 & 9.15 & 6.23 \\
Median & 0.15 & 0.42 & 0.32 & 0.15 \\
Timeouts & 43 & 273 & 154 & 135 \\
\midrule
Average $\mathtt{c}$-to-$\mathtt{v}$ & 0.99 & 4.99 & 4.59 & 4.37\\
Median $\mathtt{c}$-to-$\mathtt{v}$ & 0.07 & 0.22 & 0.15 & 0.14 \\
\midrule
Average $\mathtt{v}$-to-$\mathtt{v}$ & 18.97 & 37.46 & 34.54 & 16.63\\
Median $\mathtt{v}$-to-$\mathtt{v}$ & 8.48 & 60.00 & 28.62 & 0.21\\
\bottomrule
\end{tabular}
\end{table}

\paragraph{Space and query time}

Table \ref{tab:statistics} shows the space usage of the systems we tested, as well as statistics about the whole query process.
The ring is the smallest index, using 16.41 bytes per triple. This is about twice the space of the compact representation of the data, consistent with the fact that we duplicate all the edges. Array $D$, needed at query time, uses 3.09 additional bytes per triple, whereas $B$ uses $9.04\times10^{-5}$ bytes per triple. The total working space usage at query time is 19.50 bytes per triple, $1/3$--$1/5$ of the space used by the other indexes (not considering their extra working space).

Regarding query time, the ring is the fastest on average, being 1.67 times faster than Blazegraph, the next best performer. The ring is also the index with fewest timeouts. 
On the queries where some node is a constant (``$\mathtt{c}$-to-$\mathtt{v}$'' in the table, 84.7\% of the log), the ring is 4.41 times faster than Blazegraph on average, with only 1 timeout (no system was able to complete that query). The other systems have over 50 timeouts on these queries.
For the queries where both nodes are variables (``$\mathtt{v}$-to-$\mathtt{v}$'', 15.3\% of the log), the ring is the second best, only 14\% slower than Blazegraph on average. 

Fig.~\ref{fig:query-time-1} shows the distribution of query times for the different patterns.
\begin{figure}[pt!]
\centering
    \includegraphics[width=\columnwidth]{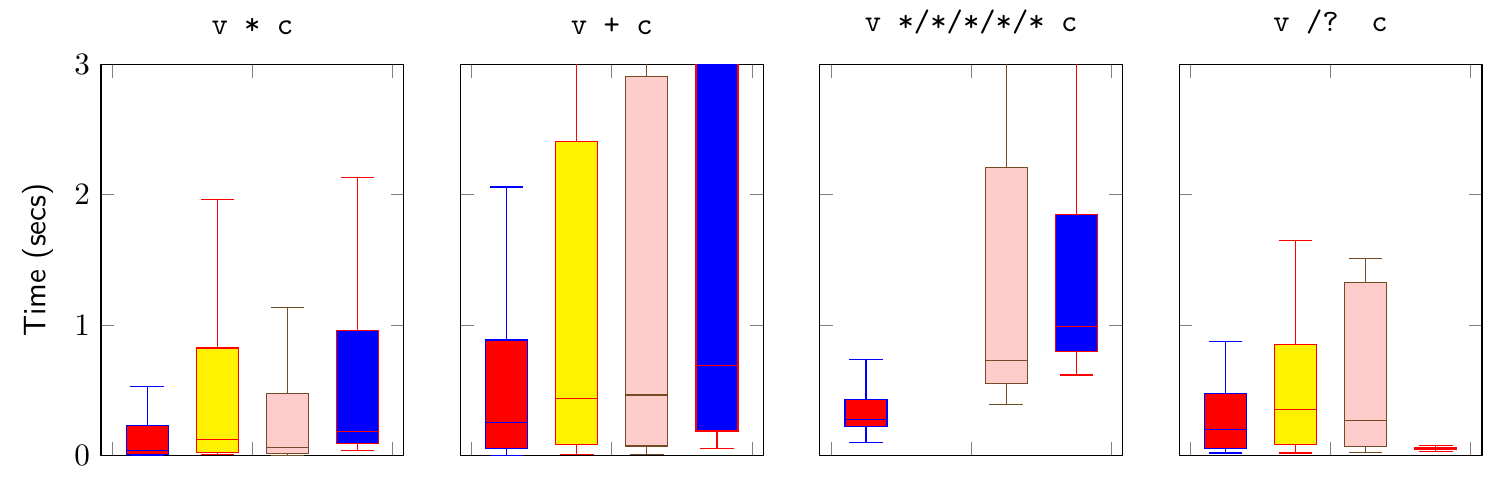}
    \includegraphics[width=\columnwidth]{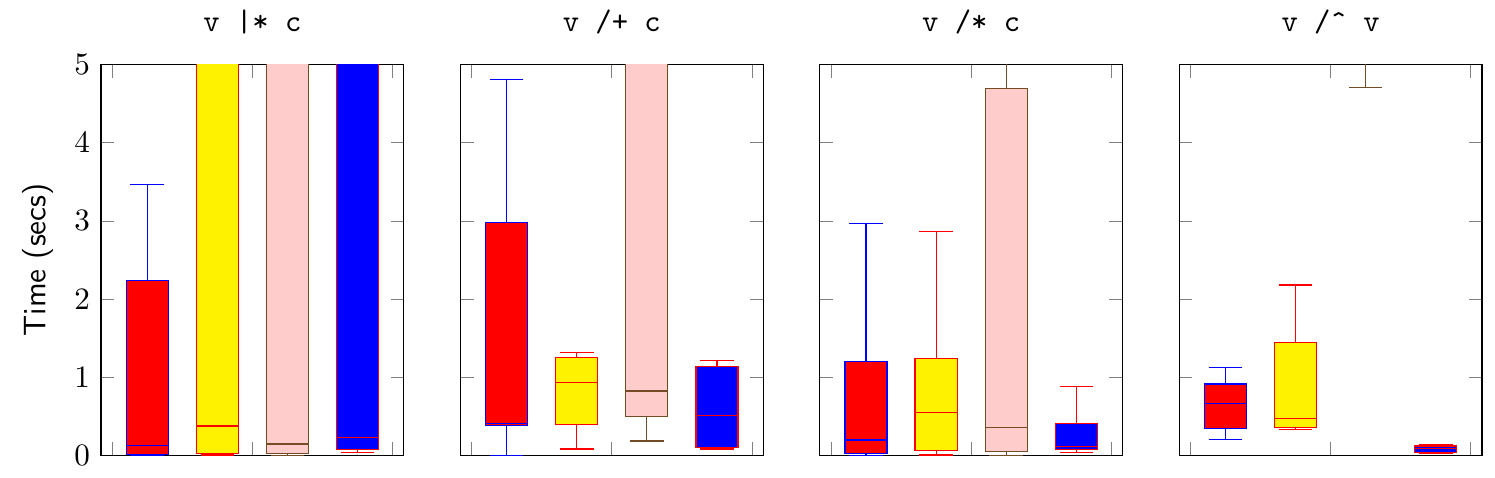}
    \includegraphics[width=\columnwidth]{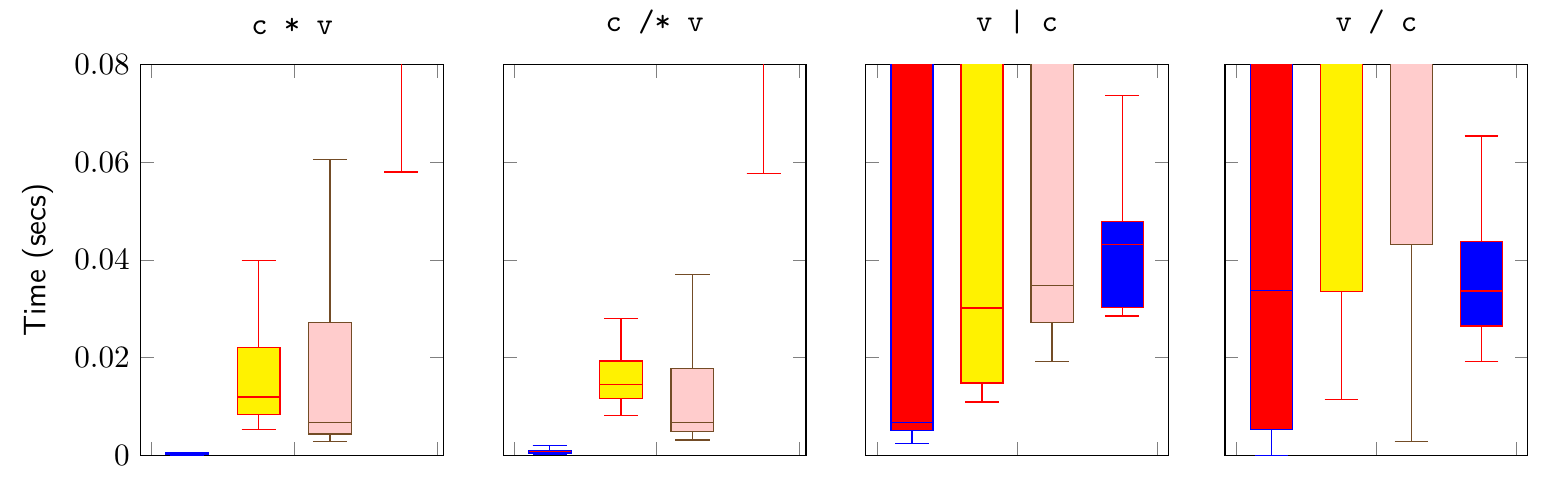}
    \includegraphics[width=\columnwidth]{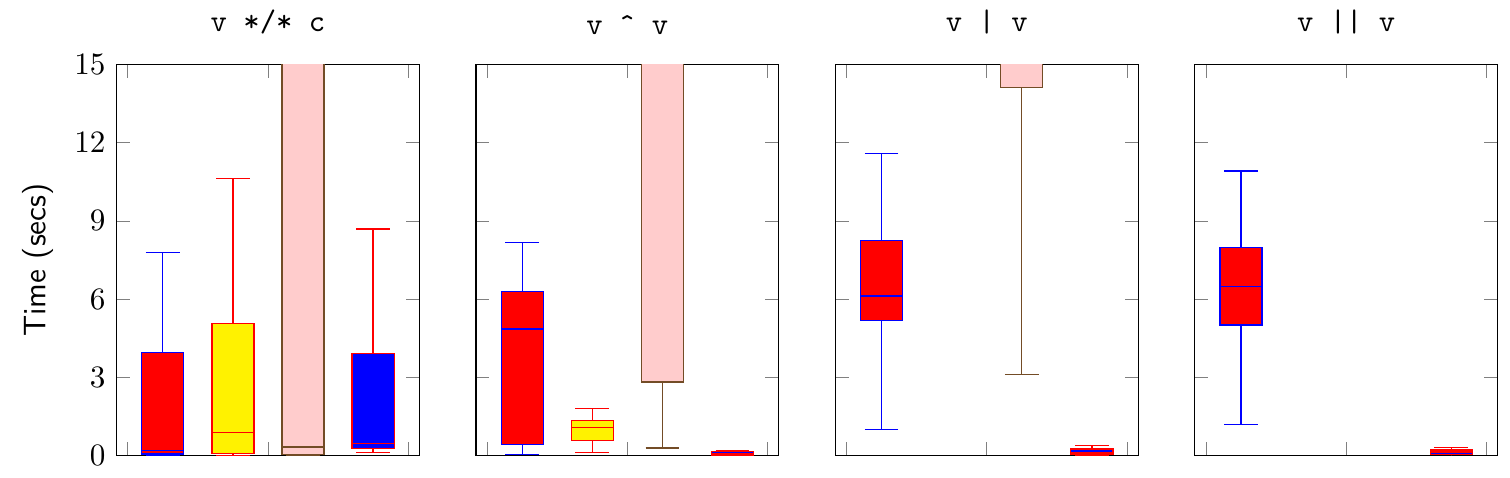}
    \includegraphics[width=\columnwidth]{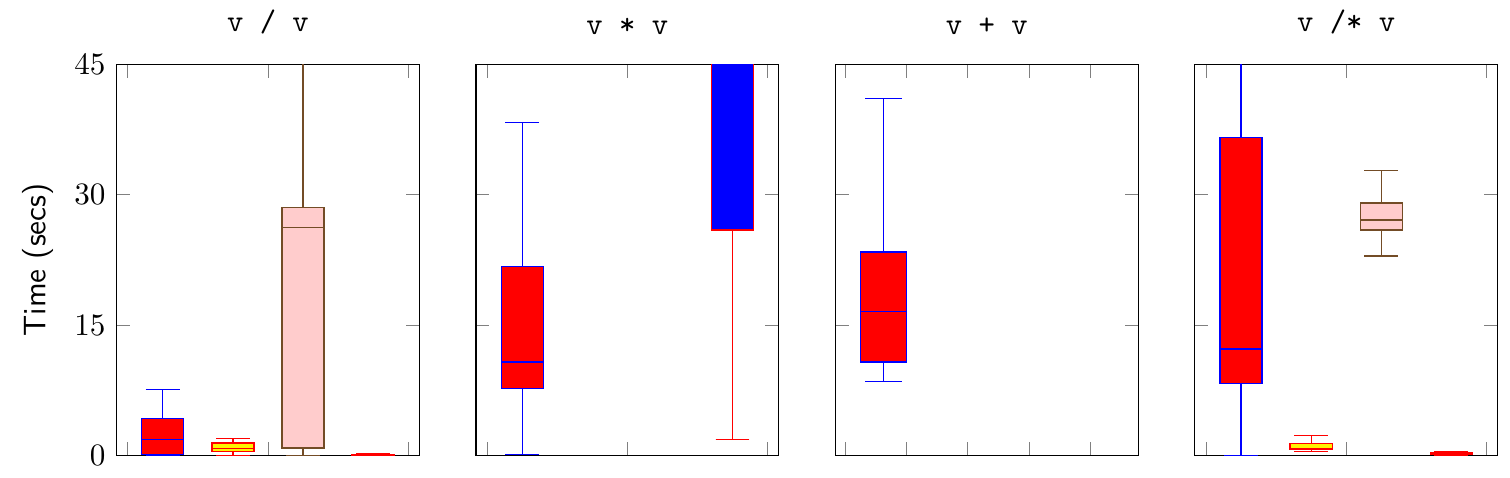}

    \vspace{0.3cm}
    \begin{tikzpicture}
       \begin{customlegend}[legend columns=4,legend style={draw=none,column sep=1ex, font=\footnotesize, cells={anchor=west}, at={(0.5,-0.15)},
      anchor=north, /tikz/every even column/.append style={column sep=0.2cm}},legend entries={\small\textsf{Ring},\small\textsf{Jena}, \small\textsf{Virtuoso}, 
       \small\textsf{Blazegraph}}]
          \addlegendimage{blue,fill=red, ybar stacked}
          \addlegendimage{red,fill=yellow,ybar stacked}
          \addlegendimage{brown,fill=red!20,ybar stacked}
          \addlegendimage{red,fill=blue,ybar stacked}
       \end{customlegend}
    \end{tikzpicture}
    \caption{Boxplots for the distribution of query times.} 
    \label{fig:query-time-1}
\end{figure}
Our approach has the best performance in 9 out of 20 patterns, which correspond to 
45.39\% of the query log. 
Each of these 9 patterns have at least one $*$ or $+$. There are only other 3 such patterns, \texttt{v/+c}, \texttt{v/*c}, and \texttt{v/*v}, on which our ring is outperformed by another system. The remaining 8 patterns on which we do not compete are paths of length 1, length 2, and ``or'' paths. Such paths can be solved as join queries, with more efficient algorithms. 

Note that our system works on the integer-encoded triples, whereas others work on the original string values. As shown in previous work \cite{AHNRRS21}, we can encode the strings of this benchmark within just 3 additional bytes per triple and incurring in around 3 extra milliseconds per query, in order to decode the answers.

\section{Conclusions}

We have shown how the ring \cite{AHNRRS21}, a compact representation of labeled graphs developed to support worst-case optimal graph joins, can be enhanced by combining in a unique way the capabilities of (1) the wavelet trees, to process ranges of graph nodes or labels, and (2) the bit-parallel simulation of Glushkov automata, to handle various NFA states simultaneously, in order to solve regular path queries (RPQs) on the graph. We prove that the cost of the resulting algorithm is proportional to the subgraph of the product graph induced by the query, but our technique is even faster because it is able to process groups of nodes and labels simultaneously. As a result, our index uses 3--5 times less space than the alternatives, while matching or exceeding their performance (on average, our index is the fastest, outperforming the next best by a factor of 1.67).

We have not yet explored strategies for partitioning the NFA at edges with labels that appear infrequently in the graph and then joining the results, as do several techniques described in Section~\ref{sec:related}. Our techniques do permit running the NFA forwards or backwards from those labels, so this could be explored in future. Furthermore, the wavelet tree offers powerful operations that provide on-the-fly selectivity statistics, which can be used for even more sophisticated query planning. For example, by roughly doubling the space, we can compute in logarithmic time the amount of distinct predicates labeling edges towards a given range of objects, or distinct subjects that are sources of a given range of predicates \cite{GKNP13}.

%

Our technique is particularly well-suited to integrate RPQs in SPARQL multijoin queries solved with Leapfrog Triejoin, reusing the same ring data structure \cite{AHNRRS21}. In this case, in addition to the triples of the basic graph patterns, there will be triples of the
form $(x,E,y)$, where $E$ is a regular expression. By treating $E$ as any other relation, the Leapfrog algorithm will 
choose to first instantiate $x$ (resp., $y$), and thus will ask for the smallest $x \ge x_0$ (resp., $y \ge y_0$) that has a solution for some $y$ (resp., $x$). Later, it will instantiate $y$ (resp., $x$) and will ask for the smallest $y \ge y_0$ (resp, $x \ge x_0)$ that has a solution for a concrete value of $x$ (resp., $y$). The capability of wavelet trees to work on ranges of symbols allows us to find those smallest $x \ge x_0$ or $y \ge y_0$ values efficiently, for example by successive binary partitioning of the range of candidates. 

Other requirements 
are also efficiently met with our data structures. For example, we can easily enforce visiting specific nodes within the regular expression, or that those nodes have specific attribute values, by marking the noncomplying nodes as already visited with the NFA states that enforce those conditions, so they will be avoided in our traversal. The bit-parallel Glushkov simulation also efficiently handles classes of symbols labeling the NFA edges (like $(\gf{l1}|\gf{l2}|\gf{l5})$, or negated labels), without building unnecessarily large NFAs; this could be used to support negated property sets defined in SPARQL property paths, or even inference over RDF graphs (e.g., handling virtual disjunctions of inferred properties). 


\vspace*{-2mm}
\begin{acks}
This work was supported by ANID – Millennium Science Initiative Program – Code ICN17\_002.
\end{acks}

\bibliographystyle{ACM-Reference-Format}
\bibliography{paper}

\end{document}